\begin{document}
\title{From Community Detection \\ to  Community Deception}
\author{Valeria Fionda\inst{1} \and Giuseppe Pirr\`o\inst{2}}
\authorrunning{V. Fionda \and G. Pirr\`o}   
\institute{DeMaCS, University of Calabria, Italy\\
	\email{fionda@mat.unical.it}\\
	\and
	Institute for High Performance Computing and Networking, ICAR-CNR, 
	Italy\\
	\email{pirro@icar.cnr.it}
}
\tocauthor{
	Valeria Fionda (University of Calabria)
	Giuseppe Pirr\`o (ICAR-CNR)
}
\maketitle
\begin{abstract}
The community deception problem is about how to hide 
a target community $\comH$ from community detection algorithms. The 
need for deception emerges whenever a group of entities (e.g., activists, 
police enforcements) want to cooperate while concealing their existence as a 
community. In this paper we introduce and formalize the community 
deception problem.  To solve this problem, we describe algorithms that 
carefully rewire the connections of $\comH's$ members. 
We experimentally show how several existing
community detection algorithms can be deceived, and quantify the level of 
deception by introducing a deception score. 
We believe that our study is intriguing since, while showing how deception 
can be realized it raises awareness for the design of novel detection 
algorithms robust to deception techniques.
\end{abstract}
\section{Introduction}
\label{sec:introduction}
Many aspects of everyday life involve networks; social networks, biological 
networks, and the World Wide Web are just a few examples. The study of 
networks touches many disciplines ranging from physics to 
computer and social science. 
One important task in network analysis is the identification of 
communities, that is, regions (subsets of vertices) of a network that help to 
gain insights about its structure~\cite{fortunato2010community}.
Detecting communities is useful for several purposes such as identifying 
topics in information networks~\cite{revelle2015finding}, 
criminal organization from mobile networks~\cite{ferrara2014detecting}, 
friendship in social networks~\cite{yang2013community} or motifs in 
biological networks~\cite{fionda2011biological}.

While community detection 
is a well-understood and studied problem, little has been done in terms of 
\textit{community deception}. 
Solving the community deception problem amounts at devising 
techniques to conceal the existence of a target community from community 
detection algorithms. Studying community deception is intriguing from two 
different perspectives. On one hand, deception techniques can be useful for 
activists in despotic regimes to hide themselves (as a group) from software 
like the Laplace's 
Demon, a protest monitoring system developed by a pro-Kremlin 
group~\cite{laplace-daemon};
or, police enforcements to avoid to be tracked as done by Ukrainian bloggers 
that tracked Russian soldiers on social media~\cite{ukranian-bloggers}. 
On the other hand, the study of deception techniques raises awareness 
for design of novel community detection algorithms robust to 
community deception techniques as deception could also be used for 
malicious purposes. 
When embarking on the study of community deception we 
identified some research challenges, among which: (i) how to practically 
realize community deception? (ii) how to devise computationally 
feasible algorithms? (iii) how to assess the degree of deception of  
a target community $\comH$? We show how to tackle challenge (i) by 
rewiring in a principled way the connections of $\comH'$s members. As for 
challenge (ii) we present two greedy algorithms; the first based on 
modularity and the second one based on a novel measure of community 
safeness. To 
tackle challenge (iii) we introduce a deception measure that computed 
before and after applying community deception algorithms allows to 
measure their success.

\smallskip
\noindent
\textit{Related Work.} Community detection algorithms strive 
to maximize cluster quality measures such as 
modularity~\cite{blondel2008fast,newman2006finding}, adopt probabilistic 
approaches based, for instance, on random 
walks~\cite{yang2013community,rosvall2008maps} or use network 
attributes~\cite{yang2013community}. Yet other approaches study 
the problem of finding a community given a set of 
vertices~\cite{barbieri2015efficient}. 
Fortunato~\cite{fortunato2010community} provides a comprehensive study 
on this topic while other studies focus on the evaluation of community 
detection algorithms (e.g.,~\cite{leskovec2010empirical,yang2015defining}).

In this paper we take a different direction and tackle the 
problem of designing algorithms to \textit{deceive} community detection 
algorithms. Our goal is to to hide a community from being discovered 
by community detection algorithms. We are not aware of any previous work 
on this topic.
Note that community deception differs from community preservation.
This latter problem is usually tackled via techniques such as 
k-anonymity, k-degree 
anonymity~\cite{campan2015preserving} or 
k-isomorphism~\cite{cheng2010k} and is focused on the 
assessment of how well the anonymization preserves  communities from the 
original network~\cite{campan2015preserving}. In contrast, 
 tackling community deception does not require anonymization as the 
 goal is to hide a community while keeping its identity (i.e., the identity of its 
 members) untouched.

\smallskip
\noindent
\textit{Contributions and Outline.} We make the following main 
contributions: 
\vspace{-.2cm}
\begin{itemize}
	\item[$\bullet$] introducing and formalizing the community deception 
	problem, which, to the best of our knowledge, has no been studied 
	before; 
	\item[$\bullet$] presenting two algorithms for community deception, one 
	based on modularity and the other based on a novel measure of 
	community safeness;
	\item[$\bullet$] showing how our algorithms are able to deceive several 
	existing community detection algorithms on real and synthetic networks.
\end{itemize}

\noindent
The remainder of the paper is organized as follows. 
Section~\ref{sec:preliminaries} introduces the problem and provides an  
example. Section~\ref{sec:min-modularity} 
presents our first algorithm for community deception based on 
modularity. In Section~\ref{sec:safeness} we introduce our second algorithm 
based on community safeness. The experimental 
evaluation is discussed in Section~\ref{sec:evaluation}.
We conclude and sketch future work in Section~\ref{sec:conclusions}.
%
\vspace{-0.3cm}
\section{Problem Statement and Running Example}
\label{sec:preliminaries}
%

\noindent
A network $\netD$ is an undirected graph that includes a set of 
\numnodes:=$|\nodes|$ vertices and \numedges:=$|\edges|$ edges.
 We denote by 
 $\degV{v}$=$|\neigh{v}|$ the degree of $v$, where $\neigh{v}$ is the set of neighbors of $v$. The set of communities is 
 denoted by
 $\comS$=\{$\com_1,\com_2,...\com_k$\}
 and $\com_i\in \comS$ 
		denotes the \textit{i-th} community.	
	$E(\com_i)$ denotes the set of edges that are incident to some nodes in 
		$\com_i$. We distinguish between intra-community edges of the 
		form $\eC{u}{v}: 
		u,v\in \com_i$ and inter-community edges of the form 
		$\edgesCs{u}{v}$:$u\in \com_i, v\in \com_j$.
		The 
		degree of a community is denoted by: $\degC{\com_i}$=$\sum_{v\in 
			\com_i} \ \degV{v}$. $\netPE$ 
		(resp., $\netME$) denotes a set of edge additions (resp., deletions) 
		on 
		$\net$. 
		We denote by  $\comH\subseteq \nodes$ the community, not 
		necessarily part of $\comS$, that we want to hide from community 
		detection algorithms.
\begin{probl}[\textbf{Community Deception}]	Let $\netD$ be a 
	network 
	and $\comAlgo$ a community 
	detection algorithm. Given a community $\comH \subseteq 
	\nodes$, and a deception function $\phi_{\comAlgo} (\net, \netN)$, find 
	a 
	network 
	$\netND$ with $E'=(E\cup \netPE) \setminus \netME$ such that:
	\begin{center}
		\vspace{-.9cm}
		\begin{equation}
		\label{eq:comProtProblem} \argmax_{\netN}\{ \phi_{\comAlgo} (\net, 
		\netN)\}\end{equation}

		\vspace{-.34cm}
		{where \small $\netPE$$\subseteq$$\{(u,v): u$$\in$$\comH \vee v$$\in$$\comH, 
			(u,v)$$\notin$$E\}$ and $\netME$$\subseteq$$\{(u,v): 
			u$$\in$$\comH \vee v$$\in$$\comH, (u,v)$$\in$$E\}$\Big\}}
	\end{center}
	\label{def:probl-com-prot}
\end{probl}

%
\vspace{-.1cm}

The function $\phi_{\comAlgo}$ mimics the process of \textit{deceiving} a 
community 
detection algorithm $\comAlgo$ so that $\comH\notin \comS$. 
Solving the community deception problem amounts at designing algorithms 
capable to find an updated network $\netND$ so that $\phi(\net,\netN)$ is 
maximized. An optimal algorithm for this problem is computationally 
hard as it requires an exhaustive exploration of all possible 
combinations of (subsets of) edge updates (i.e., $\netPE$ and 
$\netME$). As we will describe shortly, we resort to greedy 
algorithms that find the local optimum at each evaluation step.

Moreover, to measure the level of deception of $\comH$ we introduce the 
$\hScore{}$ 
score, which encompasses different kinds of information such as reachability 
between $\comH's$ members and their spreading in 
$\comSD$. This score (see 
Definition~\ref{def:deception-score}),
computed before ($\hScore{\net}(\comH,\comAlgo)$) and after 
($\hScore{\netN}(\comH,\comAlgo)$) the usage of a community deception algorithm, allows to 
quantify its performance -- we will write $\hScore{\netN}$ when $\comH$ and $\comAlgo$ are clear from the context. The worst case (i.e., $\hScore{}$=0) occurs 
when $\comH$ belongs to the output of $\comAlgo$ ($\comH\in 
\comS$). If $\comH\notin \comS$ then $\comH's$ members can 
be spread inside $\comS$ in many ways, thus leading to many $\hScore{}$ 
values. The best $\hScore{}$ (i.e., $\hScore{}$$\sim$1) is obtained when 
$\comH's$ members 
are reachable from one another and spread in different (large) communities.

\smallskip
\noindent
\underline{Running Example.} Consider the Zachary's Karate Club 
network~\cite{zachary1977information} and the partition in communities 
$\comS=\{\com_1,\com_2,\com_3,\comH\}$  shown in 
Fig.~\ref{fig:running-example} (a) obtained by the Louvain community 
detection algorithm ($\comAlgo$=\texttt{louv})~\cite{blondel2008fast}. To model 
the 
worst-case scenario from the deception point of view, we assume 
$\comH$=$\{24,25,26,28,29,32\}$, that is, 
$\comH\in\comS$ and then $\hScore{\net}(\comH,\comAlgo)$=0. We now 
outline our  
deception algorithms.

In the first algorithm ($\algoMod$) the function $\phi_{\comAlgo}$ (in the 
general 
statement of the deception problem) is the
\textit{modularity loss} $\modLoss$=$\modCH$-$\modCHS$. $\algoMod$'s 
greedy strategy at each step picks the edge change with the 
highest $\modLoss$. Our choice to use modularity for 
community deception stems from the observation that several community 
detection algorithms (e.g.,~\cite{blondel2008fast,newman2006finding}) are 
based on modularity maximization. Therefore, if the edge update found 
by $\algoMod$ introduces a modularity loss, then a 
community detection algorithm applied to $\netN$ (i.e., $\net$ after 
applying the update) will possibly give a partitioning in communities more 
favorable to $\comH$ than that in $\net$ (i.e., 
$\hScore{\netN}>\hScore{\net}$). 
\begin{figure}[!t]
	\centering
	\includegraphics[width=\textwidth]{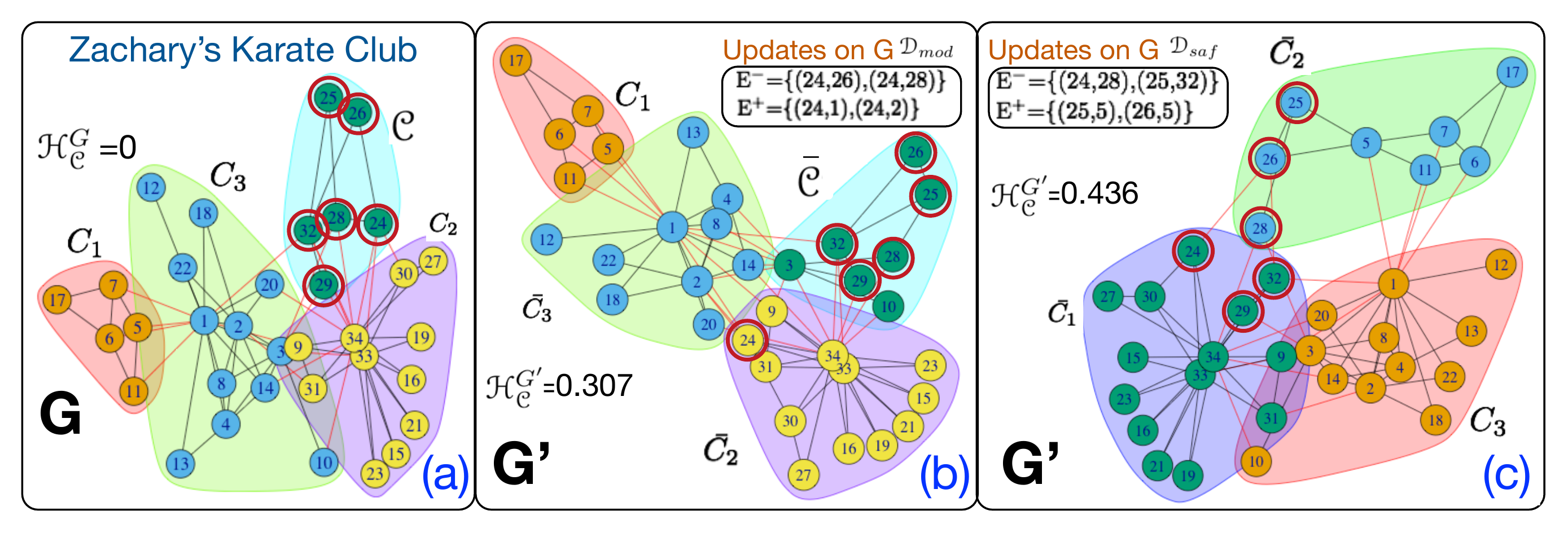}
	\vspace{-.7cm}
	\caption{Communities found by Louvain~\cite{blondel2008fast}(a); 
		output of Louvain after modularity-based deception (b); 
		output of Louvain after safeness-based deception (c).}
	\label{fig:running-example}
	\vspace{-.56cm}
\end{figure}
%
One key feature of $\algoMod$ is that to determine the best edge updates it 
does not require to recompute modularity from scratch for each candidate 
edge. $\algoMod$ leverages updating rules (see 
Section~\ref{sec:impact-edges-modularity}) able to measure the impact of 
an update on modularity before applying it. Fig.~\ref{fig:running-example} 
(b) reports the output of the community detection algorithm $\comAlgo$ on 
$\netN$, the network obtained after applying the updates found by 
$\algoMod$ (reported in the inner-box) on $\net$.
In terms of deception, the situation for $\comH$ has improved. This is 
because: (i) 
its members are now spread in two communities (e.g., 24, 
is now part of $\bar{\com_2}$); (ii) other nodes of $\comH$ are now 
grouped with node 3 and 10 in $\bar{\comH}$. Indeed, the deception score 
goes from $\hScore{\net}$=0 to $\hScore{\netN}$=0.307. Nevertheless, 
node 24 is now disconnected from the other members of 
$\comH$.
It is worth to mention that the choice of the type of update is 
subtle. In 
fact, if one were to add the edge (24,25) in 
Fig.~\ref{fig:running-example} (a) instead of (24,1) and running again 
$\comAlgo$ on the updated network, modularity would have 
increased and all $\comH's$ members would have remained 
in the same community;
same reasoning for the deletion of (24,33) 
instead of 
(25,26). 
We will formally study these aspects in Section~\ref{sec:min-modularity}.

Since not all detection algorithms are based on modularity 
(e.g.,~\cite{rosvall2008maps}) we have devised another deception algorithm 
$\algoSaf$  where $\phi_{\comAlgo}$ is the \textit{safeness gain} 
$\sIncrease$=$\safScoreN{\netN}-\safScore$. Safeness $\safScoreN{}$ 
looks 
at reachability 
between $\comH's$ members and their connection with nodes not in $\comH$ (see Section~\ref{sec:safeness}). Also in this case it is possible to  
determine the 
impact of updates on safeness before applying them to $\net$ 
(Section~\ref{sec:impact-edges-safeness}).
The output of $\comAlgo$ after applying the changes (see inner-box) 
detected by $\algoSaf$ is reported in Fig.~\ref{fig:running-example} (c). 
Intuitively,  $\algoSaf$ gives a better set of changes than $\algoMod$ since: 
(i) 
$\comH's$ members are now equally spread in two communities while in 
Fig.~\ref{fig:running-example} (b) this is not the case; (ii) $\comH's$ 
members are now better ``hidden'' with nodes in $\bar{\com_1}$ and 
$\bar{\com_2}$; (iii) all nodes of $\comH$ are reachable from one another. 
$\algoSaf$ gives a higher deception score, that is, $\hScore{\netN}$=0.436.
%

The worst-case scenario discussed in this example underlines how 
our community deception algorithms were able to detect a few 
updates that significantly increased the deception of $\comH$ in a real 
network. Even if in this example the Louvain algorithm was used, 
our algorithms can deceive any community detection algorithm as we will 
discuss in the experimental evaluation section (Section~\ref{sec:evaluation}). 
%
\section{Community Deception via Modularity}
\label{sec:min-modularity}
We now introduce the first community deception algorithm 
($\algoMod$) based on modularity~\cite{newman2006modularity},  a 
well-studied measure\footnote{Other types of 
modularity (e.g., generalized~\cite{ganji2015generalized}) are 
orthogonal to our study.} in the community detection literature.
\vspace{-.4cm}
\begin{defn} \normalfont \textbf{(Modularity).} Given a network $\net$, the 
	modularity of the partition of this network into communities 
	$\comS$=\{$\com_1,\com_2,...\com_k$\} is given by:
	\vspace{-.2cm}
	\begin{equation}
	\modularity{\net}{\comS}=
	\frac{\eta}{\numedges} - 
	\frac{\delta}{4\numedges^2}
	\label{eq:modularity}
	\vspace{-.1cm}
	\end{equation}
	
where $\eta$=$\sum_{\com_i \in 
	\comS}{|\edgesC{\com_i}|}$ and $\delta$=$\sum_{\com_i \in 
	\comS}{\degC{\com_i}}^2$. 
\end{defn}


Modularity measures the number of edges falling within groups minus their 
expected number in an equivalent network with edges placed at random
The objective of many community detection algorithms is to 
maximize modularity~\cite{brandes2008modularity}. 
Our first deception algorithm $\algoMod$ 
considers the function $\phi_{\comAlgo}$ (see 
Problem~\ref{def:probl-com-prot}) to be the 
modularity loss 
$\modLoss$=$\modCH-\modCHS$ and thus the goal is to find the set of 
edge updates where $\modLoss$ is maximized.
Newman~\cite{newman2006finding} touched the somehow related problem 
of modularity minimization for the discovery of 
anti-communities~\cite{newman2006finding}.
Our approach differs in two main respects. First, community deception strives 
to maximize $\modLoss$ w.r.t. $\comH$, that is, via edge updates 
performed by $\comH's$ members only. Second, Newman's study did not 
report on the impact of the different types of edge updates 
on modularity while we formally tackle this problem in 
Section~\ref{sec:impact-edges-modularity}.

As anticipated in the running example, $\algoMod$ adopts a {greedy 
strategy} that at 
each step 
identifies the edge update that brings the highest modularity loss. 
In what follows, we first study the modularity loss for the different types of 
edge updates and then outline the $\algoMod$ 
algorithm. 
\subsection{Impact of Edge Updates on Modularity}
\label{sec:impact-edges-modularity}
Let \netD\ be a network and $\comSD$ a partitioning 
having modularity $\modularity{\net}{\comS}$. Let 
$\modLoss$=$\modularityNet$-$\modularityNetN$ be the modularity loss
and $\comH\subseteq\nodes$ a community. 

\smallskip
\noindent
{\underline{Edge Addition}.}  We first consider the
addition of an inter-community edge. 
\begin{theorem} For any 
	inter-community edge addition $(u,w)$: $u \in C_i\cap 
	\comH, w \in \com_j$, with $i\neq j$ giving {\small 
	$\netN=(\nodes,\edges\cup\{\e{u}{w}\})$} we have that:
	\begin{center}$\modLoss>0$ if, and only if, $\frac{\eta}{m(m+1)}+\frac{2m^2(\degC{C_i}+\degC{C_j}+1)-\delta(2m+1)}{4m^2(m+1)^2}>0$.
	\end{center}
	\label{th:inter-edge-add}
\end{theorem}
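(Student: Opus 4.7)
The plan is to compute $\Delta\mathcal{Q}$ directly from the modularity formula by tracking how each of the four quantities $\eta,\delta,m$ changes under the addition of an inter-community edge $(u,w)$ with $u\in C_i\cap\comH$, $w\in C_j$, and then to isolate the sign condition.

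First, I would identify how each ingredient of Equation~\eqref{eq:modularity} updates. Since $(u,w)$ is inter-community, no community gains an internal edge, so $\eta' = \eta$. The total edge count becomes $m' = m+1$. For $\delta$, only $d(C_i)$ and $d(C_j)$ change, each increasing by exactly one (because $u\in C_i$ and $w\in C_j$ each gain one unit of degree); all other community degrees are unaffected. Using $(x+1)^2 - x^2 = 2x+1$ on the two affected terms gives
\begin{equation*}
\delta' \;=\; \delta + 2\,\degC{C_i} + 2\,\degC{C_j} + 2.
\end{equation*}

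Next I would plug these into $\modLoss = \mathcal{Q}(G,\mathcal{C}) - \mathcal{Q}(G',\mathcal{C})$ and split the expression into two natural pieces: the $\eta$-part and the $\delta$-part. For the $\eta$-part, $\tfrac{\eta}{m}-\tfrac{\eta}{m+1}$ collapses to $\tfrac{\eta}{m(m+1)}$ via a common denominator. For the $\delta$-part, placing $-\tfrac{\delta}{4m^2}$ and $+\tfrac{\delta'}{4(m+1)^2}$ over the common denominator $4m^2(m+1)^2$ yields a numerator
\begin{equation*}
\bigl(\delta + 2\degC{C_i} + 2\degC{C_j} + 2\bigr)m^2 \;-\; \delta(m+1)^2,
\end{equation*}
and the $\delta m^2 - \delta(m+1)^2 = -\delta(2m+1)$ telescoping produces exactly $2m^2(\degC{C_i}+\degC{C_j}+1) - \delta(2m+1)$. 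Adding the two pieces recovers precisely the expression appearing in the statement, and the biconditional $\modLoss > 0$ iff that expression is positive is then immediate.

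The argument is essentially a bookkeeping computation, so I do not expect any deep obstacle: the only place where a sign error could creep in is in carefully distinguishing the two contributions to $\delta'$ (one per endpoint of the new edge) from the single contribution that would appear for an intra-community addition; this is what makes the inter-community case formally distinct and must be stated explicitly before the algebra. Everything else is a direct reduction to common denominators.
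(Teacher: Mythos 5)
Your proposal is correct and follows essentially the same route as the paper's own proof: it notes that $\eta$ is unchanged while $\delta$ becomes $\delta+2\degC{\com_i}+2\degC{\com_j}+2$ and $m$ becomes $m+1$, then reduces $\modLoss$ over common denominators to the stated expression, from which the biconditional is immediate. The only difference is presentational (your explicit split into the $\eta$-part and $\delta$-part), not mathematical.
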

\begin{proof}
By manipulating eq.~(\ref{eq:modularity}) we 
have that $\eta$ (sum of edges within communities) remains 
unchanged while $\delta$ (sum of the degrees in all communities) becomes 
$\delta=\delta+2+2\degC{\com_i}+2\degC{\com_j}$. This gives the new 
value of modularity:
\begin{equation}
\scriptsize
\quad \modularity{\netN}{\comS}= 
\left(\frac{\eta}{\numedges+1}\right) - \left(\frac{\delta+2+2\degC{\com_i}
	+2\degC{\com_j}}{4(\numedges+1)^2}\right)
\label{eq:add-inter-modularity}
\end{equation}
The possible modularity loss is:
{
\scriptsize
$$\modLoss=\modularity{\net}{\comS}-\modularity{\net'}{\comS}=\frac{\eta}{\numedges} - 
	\frac{\delta}{4\numedges^2}-\frac{\eta}{\numedges+1} + \frac{\delta+2+2\degC{\com_i}
		+2\degC{\com_j}}{4(\numedges+1)^2}=$$
		$$=\frac{\eta}{m(m+1)}+\frac{2m^2(\degC{C_i}+\degC{C_j}+1)-\delta(2m+1)}{4m^2(m+1)^2}$$

}

The modularity in $\netN$ that derives from the addition of an 
inter-community edge is independent from $u$ and $w$ as it only depends 
on the degrees of $\degC{\com_i}$ and $\degC{\com_j}$; the higher 
$\degC{\com_i}$ and $\degC{\com_j}$ the higher the modularity loss. The 
maximum loss can be obtained by picking as source and target communities 
for the edge addition the communities having the highest degrees.
\qed
\end{proof}

If $\comH \in \comS$, the possible modularity loss depends on 
the rank (in terms of degree) of $\comH$ in $\comS$. 
To give a hint about the result in Theorem~\ref{th:inter-edge-add}, 
consider the network in Fig.~\ref{fig:running-example} (a) where 
$\comH\in\comS$ and $\degC{\comH}$=24. Note that the 
edge (26,18) identified by $\algoMod$ brings the highest modularity loss 
since 18 is in the community with the highest degree (i.e., 
$\degC{\com_3}$=62).  
If $\comH \notin \comS$, the result of Theorem~\ref{th:inter-edge-add} 
still holds; the edge insertion with 
the highest possible loss is $(u,w)$: $u \in \com_i\cap\comH, w \in \com_j$ and 
$\degC{\com_i}+\degC{\com_j}$ is maximal. We now consider the addition 
of an intra-community edge.
\begin{theorem} For any 
	intra-community edge addition $(u,w)$: $u\in \com_i\cap\comH, w \in \com_i$ giving {\small 
		$\netN=(\nodes,\edges\cup\{\e{u}{w}\})$} we have that:
	\begin{center}
$\modLoss>0$ if, and only if, $\frac{\eta-m}{m(m+1)}+\frac{4m^2(\degC{C_i}+1)-\delta(2m+1)}{4m^2(m+1)^2}>0$. \\
	\end{center}
	\label{th:intra-edge-add}
\end{theorem}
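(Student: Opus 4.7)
The plan is to mirror the derivation used for Theorem~\ref{th:inter-edge-add} but to carefully retrack how the two sums $\eta$ and $\delta$ in the modularity formula change when the new edge $(u,w)$ lies entirely inside the single community $\com_i$. The only real change relative to the inter-community case is that now the edge contributes to the within-community edge count, so $\eta$ is not preserved, and only the degree of $\com_i$ (not a second community) is inflated. Once the updated quantities are written down, the argument reduces to a direct algebraic simplification.

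First, I would catalog the three updates produced by adding the intra-community edge: the total number of edges grows to $m+1$; the count of intra-community edges grows by one, giving $\eta+1$; and the degree of $\com_i$ grows by two, so its squared contribution to $\delta$ increases by $(\degC{\com_i}+2)^2-\degC{\com_i}^2 = 4\degC{\com_i}+4$, while all other $\degC{\com_j}$ are untouched. Substituting into eq.~(\ref{eq:modularity}) yields
\begin{equation*}
\modularity{\netN}{\comS} \;=\; \frac{\eta+1}{m+1} \;-\; \frac{\delta + 4(\degC{\com_i}+1)}{4(m+1)^2}.
\end{equation*}

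Next I would form $\modLoss = \modularity{\net}{\comS} - \modularity{\netN}{\comS}$ and regroup the two ``edge'' terms and the two ``degree'' terms separately. The first grouping gives $\frac{\eta}{m}-\frac{\eta+1}{m+1} = \frac{\eta-m}{m(m+1)}$ after clearing the common denominator, and the second grouping gives
\begin{equation*}
\frac{\delta + 4(\degC{\com_i}+1)}{4(m+1)^2} - \frac{\delta}{4m^2} \;=\; \frac{4m^{2}(\degC{\com_i}+1) - \delta(2m+1)}{4m^{2}(m+1)^{2}},
\end{equation*}
after expanding $(m+1)^2$ and cancelling $m^2\delta$ against itself. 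Summing the two contributions reproduces exactly the expression in the theorem, so $\modLoss > 0$ is equivalent to that expression being positive.

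I do not expect any real obstacle beyond bookkeeping, but the step most prone to error is the degree-sum update: one must remember that adding a single edge increases \emph{each} endpoint's degree by one and that $\degC{\com_i}$ is the sum of node degrees, so its increment is $2$ rather than $1$, and therefore $\delta$ changes by $4\degC{\com_i}+4$ rather than by $2\degC{\com_i}+1$ (which would be the analogue of a naive ``one endpoint'' argument). Getting this coefficient right is what produces the factor $4m^{2}(\degC{\com_i}+1)$ and distinguishes the intra-community statement from the inter-community one in Theorem~\ref{th:inter-edge-add}.
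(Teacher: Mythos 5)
Your proposal is correct and follows essentially the same route as the paper's proof: update $m\to m+1$, $\eta\to\eta+1$, $\delta\to\delta+4(\degC{\com_i}+1)$, substitute into eq.~(\ref{eq:modularity}), and simplify $\modLoss$ to the stated expression. The algebra (including the key observation that the community degree increases by $2$, so $\delta$ increases by $4\degC{\com_i}+4$) matches the paper's derivation exactly.
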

\begin{proof}
	By manipulating eq.~(\ref{eq:modularity}) we 
		have that $\eta'$=$\eta$$+$$1$ and 
		$\delta'$$=$$\delta$$+$$(\degC{\com_i}$$+$$2)^2-\degC{\com_i}^2$ giving the new 
		value of modularity:
		\begin{equation}
		\scriptsize
		{\modularity{\net'}{\comS}}=
		\left( \frac{\eta+1}{\numedges+1}\right) - 
		\left(\frac{\delta+4+4\degC{\com_i}}{4(\numedges+1)^2} \right)
		\label{eq:add-intra-modularity}
		\end{equation}

The possible loss is independent from $u$ and $w$; it 
	only depends on the degree of the community $\com_i$. In this case:
	
{	\scriptsize
	$$\modLoss=\modularity{\net}{\comS}-\modularity{\net'}{\comS}=\frac{\eta}{\numedges} - 
		\frac{\delta}{4\numedges^2}-\frac{\eta+1}{\numedges+1} + \frac{\delta+4+4\degC{\com_i}}{4(\numedges+1)^2}=$$
			$$=\frac{\eta-m}{m(m+1)}+\frac{4m^2(\degC{C_i}+1)-\delta(2m+1)}{4m^2(m+1)^2}$$
}
	\qed
\end{proof}
 
If $\comH=\com_i \in \comS$, then the possible modularity loss deriving 
from an intra-community  edge addition depends on the degree of 
$\com_i$. 
 If $\comH \notin \comS$, Theorem~\ref{th:intra-edge-add} 
 still holds; the edge insertion with the possible highest loss is $(u,w)$: $u \in 
 \com_i\cap\comH, w \in \com_i$ and $\degC{\com_i}$ is maximal.
	By considering an inter-community edge addition between 
	communities $\com_i$ and $\com_j$ (giving the network $\net'$) 
	and an intra-edge addition in the community $\com_i$ (giving the 
	network $\net''$) we have that:
	$\modularity{\net'}{\comS}-	
	{\modularity{\net''}{\comS}}$=$\frac{\degC{\com_i}-\degC{\com_j}
		-2\numedges-1}{2(\numedges+1)^2}$. Since $\degC{\com_i}\leq2m$, we have that $\modularity{\net'}{\comS}-	
			{\modularity{\net''}{\comS}}< 0$ and, thus,
		${\modularity{\net''}{\comS}}>\modularity{\net'}{\comS}$.
\begin{corollary}
The best edge addition, in terms of possible modularity loss, is an inter-community edge between the 
communities $\com_i$ and $\com_j$ having the highest cumulative degree and 
 such that $\com_i\cap\comH\neq \emptyset$. The modularity loss is the 
 same for each edge addition no matter the pair of nodes $u\in\com_i$ 
and $w\in\com_j$.
\label{cor:edge-addition}
	\end{corollary}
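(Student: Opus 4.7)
The plan is to assemble the corollary directly from Theorems~\ref{th:inter-edge-add} and~\ref{th:intra-edge-add} together with the comparison computation sketched just before the corollary statement; no new calculation is really needed, only a careful combination of existing pieces. I would begin by recalling the deception setting: admissible additions must be incident to $\comH$, so the source community $\com_i$ of any candidate edge is forced to satisfy $\com_i\cap\comH\neq\emptyset$.

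First I would invoke Theorem~\ref{th:inter-edge-add} to note that for any inter-community addition $(u,w)$ with $u\in\com_i\cap\comH$ and $w\in\com_j$, the new modularity depends on the endpoints only through the aggregate $\degC{\com_i}+\degC{\com_j}$. Hence the loss $\modLoss$ is the same for every choice of $u,w$ within a fixed pair of communities, and is monotone increasing in the cumulative degree. Consequently, the optimal inter-community candidate is any edge between the ordered pair $(\com_i,\com_j)$ maximizing $\degC{\com_i}+\degC{\com_j}$ subject to $\com_i\cap\comH\neq\emptyset$; this already yields the second sentence of the corollary.

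Second I would rule out intra-community additions using the displayed identity in the paragraph preceding the statement. For the best inter-community network $\net'$ and an intra-community network $\net''$ built in $\com_i$, one has
\[
\modularity{\net'}{\comS}-\modularity{\net''}{\comS}=\frac{\degC{\com_i}-\degC{\com_j}-2\numedges-1}{2(\numedges+1)^2}.
\]
Since $\degC{\com_i}\leq 2\numedges$, the numerator is strictly negative, so $\modularity{\net''}{\comS}>\modularity{\net'}{\comS}$; equivalently the modularity loss for the intra-community addition is strictly smaller than for the inter-community addition. Combining this with the first step establishes that the maximizer must be an inter-community edge and completes the proof.

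The main thing to be careful about is not really an obstacle but a bookkeeping point: the $\comH$-intersection constraint is asymmetric, so the optimization ranges over \emph{ordered} pairs $(\com_i,\com_j)$ with $\com_i\cap\comH\neq\emptyset$, and one must verify that the dominance of the inter-community case holds regardless of which of the two communities contains the $\comH$-endpoint. This follows because Theorem~\ref{th:inter-edge-add} is symmetric in $\com_i$ and $\com_j$ in its dependence on the cumulative degree, so the argument is unaffected by which side carries the $\comH$-vertex.
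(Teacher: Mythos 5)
Your proof is correct and follows essentially the same route as the paper: it combines Theorem~\ref{th:inter-edge-add} (the loss depends only on $\degC{\com_i}+\degC{\com_j}$ and grows with it) with the displayed comparison $\modularity{\net'}{\comS}-\modularity{\net''}{\comS}=\frac{\degC{\com_i}-\degC{\com_j}-2\numedges-1}{2(\numedges+1)^2}<0$ to rule out intra-community additions. The only difference is your explicit remark about the asymmetric $\comH$-constraint on ordered pairs, which the paper leaves implicit.
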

%
\noindent
{\underline{Edge Deletion}.} The proofs of the following theorems, similar in 
spirit to those of edge additions, are 
available in the Appendix. We start with the deletion of an 
inter-community edge.
\begin{theorem} For any 
	inter-community edge deletion $(u,w)$: $u \in \com_i\cap\comH, w \in 	
	\com_j$, with $i\neq j$ giving {\small 
	$\netN=(\nodes,\edges\setminus\{\e{u}{w}\})$} we have that:
	\begin{center}
		\small
		$\modLoss>0$ if, and only if,  
		$\frac{\delta(2m-1)-2m^2(\degC{C_i}+\degC{C_j}+1)}{4m^2(m-1)^2}-\frac{\eta}{m(m-1)}>0$ \\
	\end{center}
	\label{th:inter-edge-del}
\end{theorem}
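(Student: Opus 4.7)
The plan is to mirror the bookkeeping used in the proof of Theorem~\ref{th:inter-edge-add}, tracking how each quantity in eq.~(\ref{eq:modularity}) changes when a single inter-community edge is removed rather than added, and then assembling the modularity loss $\modLoss=\modularity{\net}{\comS}-\modularity{\netN}{\comS}$ over a common denominator.

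First I would describe how $\eta$, $\delta$ and $m$ are updated. Because $(u,w)$ joins two distinct communities $\com_i$ and $\com_j$, it is not counted in $\eta=\sum_{\com_k}|\edgesC{\com_k}|$, so $\eta$ is unchanged. The endpoint degrees both drop by one, so the community degrees satisfy $\degC{\com_i}'=\degC{\com_i}-1$ and $\degC{\com_j}'=\degC{\com_j}-1$, while every other $\degC{\com_k}$ is unaffected; expanding $(\degC{\com_i}-1)^2+(\degC{\com_j}-1)^2$ therefore yields $\delta'=\delta-2\degC{\com_i}-2\degC{\com_j}+2$. The edge count drops from $m$ to $m-1$. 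Plugging these into eq.~(\ref{eq:modularity}) gives
\begin{equation*}
\modularity{\netN}{\comS}=\frac{\eta}{m-1}-\frac{\delta-2\degC{\com_i}-2\degC{\com_j}+2}{4(m-1)^2}.
\end{equation*}

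Next I would form $\modLoss$ and split it into its $\eta$-piece and its $\delta$-piece. The $\eta$-piece simplifies directly:
\begin{equation*}
\frac{\eta}{m}-\frac{\eta}{m-1}=-\frac{\eta}{m(m-1)}.
\end{equation*}
For the $\delta$-piece, put $-\delta/(4m^2)$ and $(\delta-2\degC{\com_i}-2\degC{\com_j}+2)/(4(m-1)^2)$ over the common denominator $4m^2(m-1)^2$, expand $(m-1)^2$, and cancel the $\delta m^2$ terms; the surviving terms are $\delta(2m-1)$ together with $-2m^2\degC{\com_i}-2m^2\degC{\com_j}+2m^2$, i.e.\ $\delta(2m-1)-2m^2(\degC{\com_i}+\degC{\com_j}\pm 1)$ in the numerator. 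Reading off the condition $\modLoss>0$ then yields exactly the displayed inequality. As in Theorems~\ref{th:inter-edge-add} and~\ref{th:intra-edge-add}, the result is independent of the specific endpoints $u,w$ and depends only on the community degrees $\degC{\com_i},\degC{\com_j}$, which is the feature needed by $\algoMod$ to evaluate candidate deletions without recomputing modularity.

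The only delicate step is the algebraic assembly over the common denominator; the danger is a sign slip when expanding $(m-1)^2$ and combining it with $-2\degC{\com_i}-2\degC{\com_j}+2$, since the rescaling of $m$ here goes \emph{down} by one instead of up, which changes the signs that appear in the analogous computation for Theorem~\ref{th:inter-edge-add}. Beyond that, the proof is routine, and the corresponding greedy interpretation follows: the maximum possible loss is obtained by choosing an inter-community edge incident to $\comH$ whose two communities have the largest cumulative degree $\degC{\com_i}+\degC{\com_j}$ compatible with $\com_i\cap\comH\neq\emptyset$.
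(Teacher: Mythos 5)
Your bookkeeping follows the same route as the paper's own derivation: $\eta$ unchanged, $m\to m-1$, $\degC{\com_i}\to\degC{\com_i}-1$, $\degC{\com_j}\to\degC{\com_j}-1$, hence $\delta\to\delta-2\degC{\com_i}-2\degC{\com_j}+2$, and then the loss is assembled over the common denominator $4m^2(m-1)^2$. But the one step you explicitly flag as delicate is the one you do not actually carry out: writing ``$\pm 1$'' is not admissible in a proof. Completing your own expansion, $-\delta(m-1)^2+m^2\bigl(\delta-2\degC{\com_i}-2\degC{\com_j}+2\bigr)=\delta(2m-1)-2m^2\bigl(\degC{\com_i}+\degC{\com_j}-1\bigr)$, so the constant enters with a \emph{minus} sign. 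This does not ``read off exactly the displayed inequality'': the statement (and the paper's derivation) prints $+1$, which is the constant belonging to the addition case, where the factor $m^2\cdot 2$ combines with $(m+1)^2$ instead of $(m-1)^2$. So you must either commit to $-1$ and note that the printed $+1$ is a slip (the discrepancy in the loss being $\frac{1}{2(m-1)^2}$, immaterial for the qualitative conclusions), or you have not established the equivalence as stated; as written, the proposal leaves the decisive sign undetermined.

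A second, concrete error is your closing greedy interpretation: in the deletion loss the degree sum appears with coefficient $-2m^2<0$, so the largest possible loss comes from an inter-community edge between the communities with the \emph{smallest} cumulative degree $\degC{\com_i}+\degC{\com_j}$ (with $\com_i\cap\comH\neq\emptyset$). This is what the paper states immediately after the theorem, and it is consistent with Corollary~\ref{cor:edge-deletion}, which selects the lowest-degree community for deletions; the ``largest cumulative degree'' rule you give is the one for edge additions in Theorem~\ref{th:inter-edge-add}.
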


\noindent
If $\comH \in \comS$, the (possible) modularity loss depends on 
the rank (in terms of degree) of $\comH$ in $\comS$. 
 If $\comH \notin \comS$, the result of Theorem~\ref{th:inter-edge-del} 
 still holds; the edge deletion with the possible highest loss is $(u,w)$: $u \in 
 \com_i\cap\comH$, $w \in \com_j$, with $i\neq j$, where the sum of 
 $\degC{\com_i}$ and $\degC{\com_j}$ is minimal.	We now consider the 
 deletion of an intra-community edge. 
	\begin{theorem} For any 
			intra-community edge deletion $(u,w)$: $u \in \com_i\cap\comH, w \in \com_i$ 
			giving 
			{\small 
				$\netN=(\nodes,\edges\setminus\{\e{u}{w}\})$} we have that:
				\begin{center}
						\small 
			$\modLoss>0$ if, and only if,  
						$\frac{m-\eta}{m(m-1)}+\frac{\delta(2m-1)-4m^2(\degC{C_i}-1)}{4m^2(m-1)^2}>0$
							\end{center}
			\label{th:intra-edge-del}
	\end{theorem}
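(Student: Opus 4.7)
The plan is to follow exactly the template used in Theorems~\ref{th:inter-edge-add} and~\ref{th:intra-edge-add}: track how the deletion affects the two aggregate quantities $\eta$ and $\delta$ that appear in the modularity formula~(\ref{eq:modularity}), substitute into the expression for $\modularity{\netN}{\comS}$, and compute the loss as a straight subtraction.

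First I would do the bookkeeping. Removing the intra-community edge $(u,w)$ with both endpoints in $\com_i$ decreases the edge count by one, so the new number of edges is $m-1$, and shifts exactly one edge out of the within-community tally, so $\eta' = \eta - 1$. Only $\degC{\com_i}$ changes: each endpoint loses one neighbour, so $\degC{\com_i}$ drops by $2$, contributing a change of $(\degC{\com_i}-2)^2 - \degC{\com_i}^2 = -4\degC{\com_i} + 4$ to $\delta$. Plugging into~(\ref{eq:modularity}) gives
\[
\modularity{\netN}{\comS} = \frac{\eta-1}{m-1} - \frac{\delta - 4\degC{\com_i} + 4}{4(m-1)^2}.
\]

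Next I would subtract $\modularity{\netN}{\comS}$ from $\modularity{\net}{\comS}$ and collect terms by common denominators. The $\eta$-terms, brought over $m(m-1)$, collapse as $\frac{\eta}{m} - \frac{\eta-1}{m-1} = \frac{m-\eta}{m(m-1)}$, reproducing the first summand in the claim. The $\delta$-terms, brought over $4m^2(m-1)^2$ and combined with the $-4\degC{\com_i}+4$ contribution, simplify via $m^2 - (m-1)^2 = 2m-1$ to $\frac{\delta(2m-1) - 4m^2(\degC{\com_i}-1)}{4m^2(m-1)^2}$, matching the second summand. Hence $\modLoss > 0$ iff the sum of these two fractions is positive, which is precisely the stated inequality.

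Finally, as in the previous three theorems, I would observe that the loss depends on $(u,w)$ only through $\degC{\com_i}$, and that the numerator $-4m^2(\degC{\com_i}-1)$ is monotonically decreasing in $\degC{\com_i}$; so the maximal possible loss is obtained by deleting an intra-community edge inside the community of \emph{smallest} total degree that intersects $\comH$ (dually to the edge-addition case, where the \emph{largest}-degree communities were preferred). The main obstacle is nothing more than the careful algebraic manipulation of the two fractions over their common denominators; no new conceptual ingredient beyond the change-of-variables already used for the earlier edge-update cases is required.
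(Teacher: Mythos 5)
Your proposal is correct and follows essentially the same route as the paper: compute the new $\eta'=\eta-1$ and $\delta'=\delta-4\degC{\com_i}+4$ with $m-1$ edges, substitute into eq.~(\ref{eq:modularity}), and subtract to obtain exactly the stated expression for $\modLoss$, so positivity of the loss is equivalent to the stated inequality. Your closing observation that the loss depends only on $\degC{\com_i}$ and is maximized for the smallest-degree community intersecting $\comH$ matches the paper's discussion and Corollary~\ref{cor:edge-deletion}.
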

	
	\noindent
	If $\comH=\com_i \in \comS$, then the possible modularity loss deriving 
	from an intra-community edge deletion depends on the degree of 
	 $\com_i$. 
 If $\comH \notin \comS$, Theorem~\ref{th:intra-edge-del} 
 still holds; the edge deletion with the possible highest loss is $(u,w)$: $u \in 
 \com_i\cap\comH, w \in \com_i$ and $\degC{\com_i}$ is minimal.
	By considering an inter-community edge deletion between the 
	communities $\com_i$ and $\com_j$ (and obtaining the network $\net'$) 
	and an intra-edge deletion in the community $\com_i$ (and obtaining the 
	network $\net''$) we have that:
	$\modularity{\net'}{\comS}-	
	{\modularity{\net''}{\comS}}$=$\frac{2\numedges-1+\degC{\com_j}-\degC{\com_i}}{2(\numedges-1)^2}$. Since $\degC{\com_i}\leq2m$, we have that $\modularity{\net'}{\comS}-	
			{\modularity{\net''}{\comS}}> 0$ and, thus,
		${\modularity{\net'}{\comS}}>\modularity{\net''}{\comS}$.
	\begin{corollary}
		The best edge deletion, in terms of possible modularity loss, is an intra-community edge in the	community $\com_i$ having the lowest degree and 
				 such that $\com_i\cap\comH\neq \emptyset$. The modularity 
				 loss is the same no matter the pair of nodes 
				 $u\in\com_i\cap\comH$ 
				and $w\in\com_i$.
		\label{cor:edge-deletion}
	\end{corollary}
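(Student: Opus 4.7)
The plan is to assemble the corollary from two results already in hand: Theorem~\ref{th:intra-edge-del} for the intra-community part, and the inter-vs-intra modularity comparison stated in the paragraph immediately preceding the corollary. First I would invoke Theorem~\ref{th:intra-edge-del}, observing that the loss expression
$$\modLoss = \frac{m-\eta}{m(m-1)} + \frac{\delta(2m-1) - 4m^2(\degC{\com_i}-1)}{4m^2(m-1)^2}$$
depends on the chosen community only through its aggregate degree $\degC{\com_i}$, since $m$, $\eta$, and $\delta$ are global quantities of $\net$ and $\comS$. This directly establishes the second claim of the corollary: the loss is identical for every pair $u\in\com_i\cap\comH$, $w\in\com_i$.

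Next I would argue optimality over the choice of $\com_i$. The coefficient of $\degC{\com_i}$ in the loss is $-1/(m-1)^2<0$, so $\modLoss$ is strictly decreasing in $\degC{\com_i}$; hence among intra-community deletions the maximum loss is achieved by picking the community $\com_i$ of minimum aggregate degree. The admissibility constraint of Problem~\ref{def:probl-com-prot} forces one endpoint of the deleted edge to belong to $\comH$, which restricts the candidate communities to those with $\com_i\cap\comH\neq\emptyset$.

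Finally, to rule out inter-community deletions I would cite the identity
$$\modularity{\net'}{\comS} - \modularity{\net''}{\comS} = \frac{2m - 1 + \degC{\com_j} - \degC{\com_i}}{2(m-1)^2}$$
already derived before the corollary, where $\net'$ results from deleting an inter-community edge between $\com_i$ and $\com_j$ and $\net''$ from deleting an intra-community edge in $\com_i$. Since lower post-deletion modularity corresponds to larger loss, it suffices to note that this quantity is strictly positive, which gives $\modLoss(\net'')>\modLoss(\net')$ and completes the proof.

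The main obstacle I anticipate is the last step: justifying the strict positivity of the numerator $2m-1+\degC{\com_j}-\degC{\com_i}$. The bound $\degC{\com_i}\leq 2m$ alone is too weak; one must use that an inter-community deletion requires $\com_j$ to contain an endpoint of a real edge to $\com_i$, so $\degC{\com_j}\geq 1$ and, since $\degC{\com_i}+\degC{\com_j}\leq 2m$, we obtain $2m-1+\degC{\com_j}-\degC{\com_i}\geq \degC{\com_j}\geq 1>0$. Once this bookkeeping is in place, the three ingredients combine into the corollary with no further calculation.
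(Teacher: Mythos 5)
Your proposal is correct and follows essentially the same route as the paper: Theorem~\ref{th:intra-edge-del} gives that the intra-community loss depends only on $\degC{\com_i}$ (hence is maximized by the lowest-degree community meeting $\comH$ and is node-pair independent), and the inter-vs-intra comparison identity preceding the corollary rules out inter-community deletions. Your only departure is a sharper justification of the strict positivity of $2m-1+\degC{\com_j}-\degC{\com_i}$ via $\degC{\com_i}+\degC{\com_j}\leq 2m$ and $\degC{\com_j}\geq 1$, which indeed patches the paper's terser appeal to $\degC{\com_i}\leq 2m$ alone.
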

\subsection{The $\algoMod$ algorithm}
\label{sec:modularity-algo}
The community deception algorithm $\algoMod$ outlined in 
Algorithm~\ref{algo-modularity-min} builds upon 
the analysis performed in Section~\ref{sec:impact-edges-modularity}.  
$\algoMod$ at each step compares the two most convenient edge updates 
(line 14) as per Corollary~\ref{cor:edge-addition} (lines 5-6; lines 10-12) and 
Corollary~\ref{cor:edge-deletion} (lines 3-4; lines 8-9) and returns the 
update giving the highest modularity loss. Since the loss only 
depends from the degree of communities the algorithm returns the best 
edge update by randomly picking its endpoints. 
\vspace{-.2cm}
\begin{algorithm}
	{\scriptsize
	\caption{\scriptsize $\algoMod$ - Community deception via Modularity}
	\label{algo-modularity-min}
	\begin{algorithmic}[1]
		\Procedure{getBestUpdateModularity}{\netD,$\comS$,$\comH$}
\If {$\comH \in \comS$}
\State Let $(n_k,n_l)$:$\{n_k,n_l\}\subseteq\comH$ with $(n_k,n_l)\in E$ and 
$n_k,n_l$ randomly selected  
\State Let $(n_p,n_t)$:$n_p$$\in$$\com_i$$\cap$$\comH$ 
$n_t$$\in$$\com_j$ randomly selected; $\com_i,\com_j$ highest 
degs; $(n_p,n_t)\notin E$
\Else
\State Let $(n_k,n_l)$: 
$n_k$$\in$$\comH$$\cap$$\com_i$ and $n_l$$\in$$\com_i$ randomly 
selected; $\com_i$ has lowest degree; 
$(n_k,n_l)\in E$
\State Let $n_p$$\in$$\comH$$\cap$$\com_i$ be randomly selected, $\com_i$ be the highest degree community 
\State  Let $n_t$$\in$$\com_j$, $\com_j\neq\com_i$ has the highest degree and $(n_p,n_t)\notin E$ 

\EndIf
\State $\modLoss_{del}$= intra-community edge deletion loss for $(n_k,n_l)$
computed according to Th.~\ref{th:intra-edge-del}
\State $\modLoss_{add}$= inter-community edge addition loss for $(n_p,n_t)$
computed according to Th.~\ref{th:inter-edge-add} 

\If {$\modLoss_{del} \geq \modLoss_{add}$}
\State \Return $(V,E\setminus\{(n_k,n_l)\})$ 
\Else
\State \Return $(V,E\cup\{(n_p,n_t)\})$  
\EndIf
		\EndProcedure
	\end{algorithmic}
}
\end{algorithm}
\section{Community Deception via Safeness}
\label{sec:safeness}
In this section we describe $\algoSaf$, our second algorithm for community 
deception. 
Differently from the $\algoMod$, this approach is independent from any 
cluster quality measures. We now introduce the notion of node safeness.
%
\begin{defn}\textbf{(Node Safeness)}.
	Let $\netD$ be a network, $\comH 
	\subseteq \nodes$ a community, and $u \in \comH$ a member of 
	$\comH$. The safeness of $u$ in $\net$ is defined as:
	\begin{equation}
	\safScoreNode{u}:=\frac{1}{2}\frac{|\reachNodes{u}{\comH}|-|E(u,\comH)|}
	{|\comH|-1}+\frac{1}{2}
	\frac{|\edgesNoComH{u}|}{\degC{u}}
	\label{eq:safeness-node}
	\end{equation}
	
	where $\reachNodes{u}{\comH}\subseteq\comH$ is
		the set of nodes reachable from $u$ passing only via nodes in 
		$\comH$,  $E(u,\comH)$ is the set of edges between $u$ and some 
		node in $\comH$, $\edgesNoComH{u}$ is the set of edges 
		between 
		$u$ 
		and some node not in $\comH$.
			\end{defn}

The leftmost part of eq.~(\ref{eq:safeness-node}) takes into account the 
portion of nodes in $\comH$ that can be reached only via other nodes in 
$\comH$ balanced by the number of intra-community edges. In the 
ideal situation a member of $\comH$ will be able to reach all the other 
members of $\comH$ with the minimum number of edges, that is, one. 
This gives an account of how-well $u$ can transmit information in 
$\comH$. 
The second term of eq. 
(\ref{eq:safeness-node}) gives an account on how $u$ is ``hidden" inside 
the network with respect to its degree. To increase its safeness 
$u$ should diversify its connections, that is, have the right 
proportion of links with members of communities other than $\comH$.
We now define the safeness of $\comH$ inside a network $\net$:
\begin{defn}(\textbf{Community Safeness Score}). Given a network $\netD$ 
	and a community $\comH\subseteq \nodes$, the safeness of 
	$\comH$ denoted by $\safScore$ is defined as:
	\begin{equation*}
		\safScore= \frac{\sum_{u \in 
		\comH}\safScoreNode{u}
		}{|\comH|}
		\label{eq:safeness-community-1}
	\end{equation*}
	
\end{defn}

Defining the safeness of $\comH$ starting from the safeness of its 
members allows to identify the {least safe} and rewire their links to 
increase the score of the whole $\comH$. Safeness allows to 
control different aspects of a community such as reachability and 
internal/external edge balance that were not taken into account by the 
modularity loss. $\comH's$ members should be able to 
communicate while at the same time diversify their connections 
with members outside $\comH$. Moreover, incorporating reachability in the 
safeness formula avoids to disconnect $\comH$, which can occur when 
using the modularity loss as shown in the example in 
Fig.~\ref{fig:running-example} (b) where node 24 was disconnected from 
the other members of $\comH$.

Our second deception algorithm $\algoSaf$ considers the 
function $\phi_{\comAlgo}$ to be the 
\textit{safeness gain} $\sIncrease$=$\safScoreN{\netN}-\safScore$ and thus 
the goal is to find the set of edge updates where $\sIncrease$ is maximized.
%
\subsection{Impact of Edge Updates on Safeness}
\label{sec:impact-edges-safeness}
As usual, we treat separately edge additions and 
deletions. However, note that given a node $u\in\comH$ the safeness score 
only considers the portion of edges incident to $u$ connecting $u$ to other 
members of $\comH$ and the portion of edges that connect $u$ to nodes 
not in $\comH$. Thus, instead of talking about intra-community and 
inter-community edges, we will talk about intra-$\comH$ and 
inter-$\comH$ edges. We assume wlog that for every inter-$\comH$ edge 
$(u,w)$ we have $u\in\comH$ and $w\notin\comH$. Let $\netD$ be a 
network  and $\comH\subseteq V$ a community having safeness 
$\safScore$, we have the following results (the proofs of the theorems are 
available in the Appendix.)

\smallskip
\noindent\underline{Edge Addition.}  We start with inter-$\comH$ edge 
additions. 
\begin{theorem}  For any inter-$\comH$ edge addition $(u,w)$ s.t. $u\in\comH$ and $v\notin\comH$ giving 
$\netN=(\nodes,\edges\cup\{\e{u}{w}\})$ we have that
$\sIncrease>0$.
\end{theorem}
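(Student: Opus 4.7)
The plan is to exploit the locality of the update. Adding the edge $(u,w)$ with $u\in\comH$ and $w\notin\comH$ touches only quantities attached to the endpoint $u\in\comH$; for every other $v\in\comH\setminus\{u\}$ the sets $\reachNodes{v}{\comH}$, $E(v,\comH)$, $\edgesNoComH{v}$ and the degree $\degC{v}$ depend solely on edges incident to $v$ or on paths confined to $\comH$-vertices, none of which are affected by adding an edge between $u$ and a vertex outside $\comH$. Consequently $\safScoreNode{v}$ is preserved for all $v\neq u$, and $\sIncrease$ reduces to $\tfrac{1}{|\comH|}(\safScoreNode{u}_{\netN}-\safScoreNode{u}_{\net})$.

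Next I would show that even for $u$ itself the first term of eq.~(\ref{eq:safeness-node}) is invariant: since $w\notin\comH$, adding $(u,w)$ creates no new $\comH$-confined path, so $\reachNodes{u}{\comH}$ is unchanged, and the new edge is not counted by $E(u,\comH)$ either. Hence the entire change in $\safScoreNode{u}$ comes from the second term, which is transformed from $\tfrac{a}{d}$ to $\tfrac{a+1}{d+1}$ with $a=|\edgesNoComH{u}|$ and $d=\degC{u}$.

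A one-line calculation gives
$$\frac{a+1}{d+1}-\frac{a}{d}=\frac{d-a}{d(d+1)},$$
so after inserting the factor $\tfrac12$ from the safeness formula and averaging over $\comH$ one obtains
$$\sIncrease=\frac{1}{2|\comH|}\cdot\frac{d-a}{d(d+1)},$$
which is strictly positive whenever $d>a$, i.e., whenever $u$ has at least one neighbor inside $\comH$. The main obstacle is the degenerate case $d=a$: if $u$ has no intra-$\comH$ neighbour the external-ratio already equals $\tfrac12$ and the update leaves it at $\tfrac12$, giving only $\sIncrease\ge 0$. I would dispose of this either by the standing convention that every member of a community is incident to at least one other member of $\comH$ (so that $\comH$ is genuinely a community rather than a set with isolated points), or by weakening the statement to $\sIncrease\ge 0$ with strict inequality under this mild non-degeneracy.
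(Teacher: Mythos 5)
Your proof follows the same route as the paper's: adding the inter-$\comH$ edge leaves every other member's safeness and the first term of $u$'s safeness untouched, so the whole change reduces to the second term of $\safScoreNode{u}$ moving from $\frac{|\edgesNoComH{u}|}{\degC{u}}$ to $\frac{|\edgesNoComH{u}|+1}{\degC{u}+1}$. You are in fact more careful than the paper, which claims this ratio increase is ``always true'': as you observe, $\frac{|\edgesNoComH{u}|+1}{\degC{u}+1}>\frac{|\edgesNoComH{u}|}{\degC{u}}$ holds strictly only when $\degC{u}>|\edgesNoComH{u}|$, i.e.\ when $u$ has at least one neighbour inside $\comH$; in the degenerate case the gain is exactly $0$, so the theorem as stated needs precisely the non-degeneracy assumption (or the weakening to $\sIncrease\geq 0$) that you propose.
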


Moreover, among all 
the possible inter-$\comH$ edge addition, the more beneficial is that 
performed by the node with the minimum ratio 
$\frac{|\edgesNoComH{u}|}{deg(u)}$.
We now analyze the case of the addition of an intra-$\comH$ edge.

\begin{theorem} For any intra-$\comH$ edge addition $(u,w)$ s.t. $\{u,w\}\subseteq \comH$ giving 
	$\netN=(\nodes,\edges\cup\{\e{u}{w}\})$ we have
	$\sIncrease>0$ if: \textit{(i)} $w\notin \reachNodes{u}{\comH}$ in $G$; \textit{(ii)} $w\in 
\reachNodes{u}{\comH}$ in $\netN$ and \textit{(iii)} the following condition 
holds:
{\scriptsize{
\begin{equation*}
\sum_{v\in\comH_u\setminus\{u\}}\hspace{-.1cm}\frac{|\comH_w|}{2(|\comH|\mbox{-}1)}+
\hspace{-.3cm}\sum_{v\in\comH_w\setminus\{w\}}
\hspace{-.2cm}\frac{|\comH_u|}{2(|\comH|\mbox{-}1)}+
\frac{|\comH_w|\mbox{-}1}{2(|\comH|\mbox{-}1)}+
\frac{|\comH_u|\mbox{-}1}{2(|\comH|\mbox{-}1)}-
\frac{|\edgesNoComH{u}|}{2\degC{u}(\degC{u}\mbox{+}1)}-
\frac{|\edgesNoComH{w})|}{2\degC{v}(\degC{v}\mbox{+}1)}>0
\end{equation*}}}
 where $\comH_u$ and $\comH_w$ are the two disconnected components 
 of $\comH$ in $G$ to which $u$ and $w$ belong before the addition of 
 $(u,w)$.
\end{theorem}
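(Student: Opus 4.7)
For each $v\in\comH$, let $\Delta_v$ denote the change in $\safScoreNode{v}$ when passing from $\net$ to $\netN$. Since $\sIncrease=\frac{1}{|\comH|}\sum_{v\in\comH}\Delta_v$ and $|\comH|>0$, it suffices to show that $\sum_{v\in\comH}\Delta_v>0$. My plan is to compute $\Delta_v$ separately for each of a few natural node classes, and then recognise the total as the left-hand side of condition (iii).

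Conditions (i) and (ii) are precisely what is needed to control the reachability term of safeness. By (i), $u$ and $w$ lie in two disjoint connected components $\comH_u$ and $\comH_w$ of the subgraph of $\net$ induced on $\comH$; by (ii), the newly added edge $(u,w)$ merges these components inside $\netN$. Consequently, for every $v\in\comH_u$ the set $\reachNodes{v}{\comH}$ grows by exactly $|\comH_w|$ nodes, and symmetrically for every $v\in\comH_w$ it grows by $|\comH_u|$ nodes; all other members of $\comH$ see no reachability change, and since their degrees and edge incidences are untouched, they contribute $\Delta_v=0$.

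I would then tabulate the four remaining classes. For $v\in\comH_u\setminus\{u\}$ only the first term of $\safScoreNode{v}$ moves (its degree and its intra-/inter-$\comH$ edge counts are unchanged), giving $\Delta_v=\frac{|\comH_w|}{2(|\comH|-1)}$; symmetrically $\Delta_v=\frac{|\comH_u|}{2(|\comH|-1)}$ for $v\in\comH_w\setminus\{w\}$. For $u$ itself, $|\reachNodes{u}{\comH}|$ grows by $|\comH_w|$ while $|E(u,\comH)|$ grows by one, so the first term contributes $\frac{|\comH_w|-1}{2(|\comH|-1)}$; meanwhile the second term of $\safScoreNode{u}$ goes from $\frac{|\edgesNoComH{u}|}{2\degC{u}}$ to $\frac{|\edgesNoComH{u}|}{2(\degC{u}+1)}$, a net contribution of $-\frac{|\edgesNoComH{u}|}{2\degC{u}(\degC{u}+1)}$. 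An entirely analogous expression holds for $w$.

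Adding the four classes of contributions produces exactly the left-hand side of (iii), so $\sIncrease>0$ precisely when (iii) is satisfied. The only subtle point is the accounting for $u$ (and symmetrically $w$): the new edge $(u,w)$ simultaneously increases $u$'s reachability by one (through $w$) and increases $|E(u,\comH)|$ by one, so these cancel and leave a numerator of $|\comH_w|-1$ rather than $|\comH_w|$. Everything else is routine substitution into the definition of $\safScoreNode{\cdot}$.
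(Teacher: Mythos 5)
Your proof is correct and follows essentially the same route as the paper's: decompose the total safeness change node by node into the classes $\comH_u\setminus\{u\}$, $\comH_w\setminus\{w\}$, $u$, $w$ (with the rest of $\comH$ unaffected), and observe that the summed increments are exactly the left-hand side of condition (iii), so, given (i)--(ii), its positivity is equivalent to $\sIncrease>0$. Your accounting also implicitly fixes the typo in the statement's last term, where $\degC{v}$ should read $\degC{w}$.
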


%
The above theorem deals with the addition of an intra-$\comH$ edge 
where both members belong to $\comH$. The possibility for such an edge 
to increase the safeness of the community occurs when it allows to connect 
previously disconnected portions of $\comH$. If no new communication 
paths among nodes of the community are made available, the new edge 
will certainly decrease the safeness score. Intuitively, this is justified by the 
fact that if the edge does not bring any advantage in terms of connectivity 
among the nodes of $\comH$, it will have only the effect to get $u$ and $w$ 
more connected to members of $\comH$; thus, it is likely that $u$ and $w$ 
will be considered part of the same community. 
We believe that, because of the notion of community itself, it is reasonable to 
consider that members of $\comH$ are reachable in $\net$ from one 
another via paths involving 
other members of $\comH$, and, thus, the induced subgraph of 
$G$ on the nodes in $\comH$ should have a single connected component. 
%
\vspace{-.1cm}
\begin{corollary}
The best addition is an inter-$\comH$ edge from $u\in\comH$ 
having the lowest ratio $\frac{|\edgesNoComH{u}|}{deg(u)}$. The safeness gain is 
the same for each edge (u,w) where $w$$\in$$V$$\setminus$$\comH$.
\label{cor:edge-addition-saf}
\end{corollary}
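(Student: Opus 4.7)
The plan is to combine the two preceding edge-addition theorems with the structural assumption, discussed just above the corollary, that the induced subgraph of $G$ on $\comH$ is a single connected component. Under this assumption condition~(i) of the intra-$\comH$ addition theorem --- namely $w \notin \reachNodes{u}{\comH}$ in $G$ --- fails for every candidate intra-$\comH$ edge $(u,w)$, because $w$ is already reachable from $u$ through nodes of $\comH$. Consequently no intra-$\comH$ addition can yield $\sIncrease>0$. Since the inter-$\comH$ addition theorem already guarantees $\sIncrease>0$ for every inter-$\comH$ edge, the optimal addition must be of inter-$\comH$ type.

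Next I would compute $\sIncrease$ explicitly for an inter-$\comH$ addition $(u,w)$ with $u\in\comH$ and $w\notin\comH$. For every $v\in\comH\setminus\{u\}$ the quantities $\reachNodes{v}{\comH}$, $E(v,\comH)$, $|\edgesNoComH{v}|$ and $\deg(v)$ are all untouched, so $\safScoreNode{v}$ is unchanged, and $w$'s safeness does not even enter the sum since $w\notin\comH$. Inside $\safScoreNode{u}$ the reachability fraction $(|\reachNodes{u}{\comH}|-|E(u,\comH)|)/(|\comH|-1)$ is also unchanged, because the new neighbor lies outside $\comH$. Only the external-degree fraction moves from $|\edgesNoComH{u}|/\deg(u)$ to $(|\edgesNoComH{u}|+1)/(\deg(u)+1)$. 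Therefore
\begin{equation*}
\sIncrease \;=\; \frac{1}{2|\comH|}\left[\frac{|\edgesNoComH{u}|+1}{\deg(u)+1}-\frac{|\edgesNoComH{u}|}{\deg(u)}\right],
\end{equation*}
which depends only on $u$ and is invariant under the choice of $w\in V\setminus\comH$. This proves the ``same-gain-for-every $w$'' half of the statement.

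Finally I would argue that the optimal endpoint $u$ is the one with smallest ratio $r_u := |\edgesNoComH{u}|/\deg(u)$: the expression above is strictly increasing as $r_u$ decreases, because starting from a smaller external fraction leaves more room for the new inter-$\comH$ edge to push that fraction up. The one delicate point, and the main obstacle, is that the bracket also carries a global $1/(\deg(u)+1)$ factor, so in full generality the maximum reflects a trade-off between small ratio and small degree; however, since the safeness definition itself treats $|\edgesNoComH{u}|/\deg(u)$ as the fundamental structural quantity, minimizing $r_u$ is the natural dominant criterion and matches the corollary as stated. A fully rigorous pointwise argument would either invoke a mild comparability assumption on the candidate degrees or use degree as a tie-breaker after the ratio, either of which recovers the claimed characterization of the optimal endpoint.
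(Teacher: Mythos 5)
Your route is essentially the paper's: the connectivity assumption on the subgraph induced by $\comH$ rules out any beneficial intra-$\comH$ addition (condition (i) of the intra-addition theorem fails), the inter-$\comH$ addition theorem gives $\sIncrease>0$ unconditionally, and since only the term $\frac{|\edgesNoComH{u}|}{\degC{u}}$ of $\safScoreNode{u}$ changes, the gain depends on $u$ alone, which proves the second sentence of the corollary. Your explicit formula $\sIncrease=\frac{1}{2|\comH|}\bigl[\frac{|\edgesNoComH{u}|+1}{\degC{u}+1}-\frac{|\edgesNoComH{u}|}{\degC{u}}\bigr]$ is correct and is actually sharper than what the paper writes down: the text merely asserts, right after the inter-addition theorem, that the most beneficial addition is made by the node of minimum ratio, without exhibiting the gain. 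The ``delicate point'' you flag is genuine and is a looseness in the corollary itself rather than a defect of your argument: simplifying, the gain equals $\frac{1}{2|\comH|}\cdot\frac{\degC{u}-|\edgesNoComH{u}|}{\degC{u}(\degC{u}+1)}=\frac{1}{2|\comH|}\cdot\frac{1-r_u}{\degC{u}+1}$ with $r_u=|\edgesNoComH{u}|/\degC{u}$, so a low-degree node with a slightly larger ratio can beat a high-degree node with ratio $0$ (e.g.\ $\degC{u}=2$, one external edge, gain factor $1/6$, versus $\degC{u}=10$, one external edge, gain factor $9/110$). The literal ``lowest ratio'' rule is therefore only exact when the candidate degrees are comparable; the precise criterion is to maximize $\frac{\degC{u}-|\edgesNoComH{u}|}{\degC{u}(\degC{u}+1)}$, and it is worth noting that the paper's analogous deletion corollary (Corollary~\ref{cor:edge-deletion-saf}) does retain the degree-dependent denominators, whereas the addition corollary drops them. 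Your honest handling of this (comparability assumption or degree tie-break) is the right way to reconcile the computation with the statement as written.
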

\vspace{-.35cm}
\noindent\underline{Edge Deletion.} We start with the deletion of an 
inter-$\comH$ edge.

\begin{theorem} For any inter-$\comH$ edge deletion $(u,w)$ such that 
$u \in \comH, w \notin \comH$ giving 
	$\netN=(\nodes,\edges\setminus\{\e{u}{w}\})$ we have that  
	$\sIncrease<0$.
\end{theorem}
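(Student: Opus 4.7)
The plan is to reduce the problem to analyzing the change in a single node's safeness, namely that of the endpoint $u\in\comH$, and to show that this change is strictly negative. Since only $\safScoreNode{u}$ can be affected by the removal of $(u,w)$, the sign of $\sIncrease$ is determined by the sign of the change in $\safScoreNode{u}$.

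First I would verify that $\safScoreNode{v}$ is unchanged for every $v \in \comH \setminus \{u\}$. The quantity $\reachNodes{v}{\comH}$ depends only on the subgraph induced by $\comH$, and the deleted edge $(u,w)$ has $w\notin\comH$, so it is not in that subgraph; similarly $|E(v,\comH)|$, $|\edgesNoComH{v}|$ and $\degC{v}$ are defined in terms of edges incident to $v$, and $(u,w)$ is not among them. Hence $\sum_{v\in\comH\setminus\{u\}}\safScoreNode{v}$ is invariant under the deletion.

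Next I would analyze the two summands of $\safScoreNode{u}$ separately. Because $w\notin\comH$, the edge $(u,w)$ participates neither in any $\comH$-internal path nor in $E(u,\comH)$, so the first summand $\frac{1}{2}\cdot\frac{|\reachNodes{u}{\comH}|-|E(u,\comH)|}{|\comH|-1}$ is unchanged. The second summand changes from $\frac{1}{2}\cdot\frac{|\edgesNoComH{u}|}{\degC{u}}$ to $\frac{1}{2}\cdot\frac{|\edgesNoComH{u}|-1}{\degC{u}-1}$, and a short calculation gives
\begin{equation*}
\frac{|\edgesNoComH{u}|-1}{\degC{u}-1} - \frac{|\edgesNoComH{u}|}{\degC{u}} = \frac{|\edgesNoComH{u}|-\degC{u}}{\degC{u}(\degC{u}-1)} < 0,
\end{equation*}
where the strict inequality uses $|\edgesNoComH{u}| < \degC{u}$, i.e., $u$ has at least one intra-$\comH$ neighbor. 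Dividing the overall change by $|\comH|$ then yields $\sIncrease<0$.

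The main obstacle I anticipate is the degenerate case $|\edgesNoComH{u}|=\degC{u}$, in which $u$ has no intra-$\comH$ neighbor at all; there both summands of $\safScoreNode{u}$ stay invariant and one only obtains $\sIncrease \leq 0$. I would discharge it by invoking the paper's own working assumption, stated in the discussion of the intra-$\comH$ addition theorem, that the subgraph induced on $\comH$ is connected; under that assumption every $u\in\comH$ has at least one intra-$\comH$ neighbor and the inequality above is strict.
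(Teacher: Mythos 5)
Your proof is correct and follows essentially the same route as the paper's argument for this family of results: only $\safScoreNode{u}$ is affected, the reachability term is untouched because $w\notin\comH$, and the claim reduces to the fraction comparison $\frac{|\edgesNoComH{u}|-1}{\degC{u}-1}<\frac{|\edgesNoComH{u}|}{\degC{u}}$. Your explicit handling of the degenerate case $|\edgesNoComH{u}|=\degC{u}$ (discharged via the paper's standing assumption that the subgraph induced by $\comH$ is connected) is a point of care the paper glosses over, but it does not constitute a different approach.
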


We now analyze the case of the deletion of an intra-$\comH$ edge by 
showing that it does not always bring an increase of safeness.

\begin{theorem} For any intra-$\comH$ edge deletion $(u,w)$ s.t. $\{u,w\}\subseteq \comH$ giving 
	$\netN=(\nodes,\edges\setminus\{\e{u}{w}\})$ we have that
	$\sIncrease>0$ if:
	\vspace{-.3cm}
\begin{itemize}
\item $w\in \reachNodes{u}{\comH}$ in $\netN$; or
\item $w\notin \reachNodes{u}{\comH}$ in $\netN$ and it holds {\scriptsize$
\sum_{v\in\comH_u\setminus\{u\}}\hspace{-.cm}\frac{-|\comH_w|}{2(|\comH|\mbox{-}1)}+\sum_{v\in\comH_w\setminus\{w\}}\hspace{-.1cm}\frac{-|\comH_u|}{2(|\comH|\mbox{-}1)}-\frac{|\comH_w|\mbox{+}1}{2(|\comH|\mbox{-}1)}-\frac{|\comH_u|\mbox{+}1}{2(|\comH|\mbox{-}1)}+\frac{\edgesNoComH{u}}{2\degC{u}(\degC{u}\mbox{-}1)}+\frac{|\edgesNoComH{w}|}{2\degC{w}(\degC{w}\mbox{-}1)}$$>$$0$},
 where $\comH_u$ and $\comH_w$ are the two disconnected components of 
$\comH$ obtained after deleting $(u,w)$ to which $u$ and $w$ belong.
\end{itemize}
\end{theorem}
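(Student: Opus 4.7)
The plan is to analyse how $\safScore$ changes when $(u,w)$ is removed by isolating the per-node shifts in $\safScoreNode{v}$ for $v\in\comH$; the denominator $|\comH|$ of the community score does not move, so it suffices to sign the sum of these shifts. Only two kinds of quantities in the definition can change: the reachable set $\reachNodes{v}{\comH}$, which may shift for any $v$ whose intra-$\comH$ connectivity is affected, and the triple $(|E(v,\comH)|,|\edgesNoComH{v}|,\degC{v})$, which changes only for $v\in\{u,w\}$ since only they lose an incident intra-$\comH$ edge.

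I would then split on whether the deletion disconnects $\comH$. If $w\in\reachNodes{u}{\comH}$ in $\netN$, the subgraph of $\net$ induced on $\comH$ stays connected, so $\reachNodes{v}{\comH}$ is unchanged for every $v$ and only $\safScoreNode{u}$ and $\safScoreNode{w}$ move. For each of them, the first term $\frac{|\reachNodes{v}{\comH}|-|E(v,\comH)|}{2(|\comH|-1)}$ strictly increases (its numerator grows by one), and the second term jumps from $\frac{|\edgesNoComH{v}|}{2\degC{v}}$ to $\frac{|\edgesNoComH{v}|}{2(\degC{v}-1)}$, also strictly positive; hence $\sIncrease>0$ unconditionally in this sub-case.

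In the other sub-case, assuming as is implicit in the surrounding discussion that $\comH$ is connected in $\net$, removing $(u,w)$ breaks $\comH$ into $\comH_u\ni u$ and $\comH_w\ni w$. I would then sum the per-node contributions: each $v\in\comH_u\setminus\{u\}$ contributes $\frac{-|\comH_w|}{2(|\comH|-1)}$ from the shrinking of its reachable set, symmetrically for $v\in\comH_w\setminus\{w\}$; the endpoints $u$ and $w$ each contribute an analogous reachability-plus-intra-edge loss (because both $|\reachNodes{\cdot}{\comH}|$ and $|E(\cdot,\comH)|$ drop for them) together with the positive second-term change $\frac{|\edgesNoComH{v}|}{2\degC{v}(\degC{v}-1)}$. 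Collecting and simplifying these contributions produces exactly the inequality displayed in the statement, so $\sIncrease>0$ iff that inequality holds.

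The main obstacle is the bookkeeping at $u$ and $w$ in the disconnection case, where three quantities move at once (reachability, intra-$\comH$ edge count, and degree): their individual contributions have to be expanded carefully to reproduce the exact closed form quoted in the theorem. A subtlety worth flagging is the assumption that $\comH$ is connected in $\net$; without it the per-node reachability losses $\frac{-|\comH_w|}{2(|\comH|-1)}$ and $\frac{-|\comH_u|}{2(|\comH|-1)}$ would need to be restated in terms of the actual pre-deletion connected components of the induced subgraph on $\comH$.
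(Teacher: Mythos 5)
Your strategy---accounting node by node for the change in $\safScoreNode{\cdot}$, splitting on whether the deletion separates $u$ from $w$ inside $\comH$---is the same direct delta computation the paper uses for its safeness results, and your first case is sound (small quibble: the jump of the second term from $\frac{|\edgesNoComH{v}|}{2\degC{v}}$ to $\frac{|\edgesNoComH{v}|}{2(\degC{v}-1)}$ is only non-negative, vanishing when $|\edgesNoComH{v}|=0$; strictness comes from the first term alone). The problem is the disconnection case. Your endpoint bookkeeping does not ``produce exactly the inequality displayed in the statement,'' and the concluding ``iff'' is not justified. For $u$ the numerator $|\reachNodes{u}{\comH}|-|E(u,\comH)|$ changes by $-|\comH_w|+1$: the reachable set shrinks by $|\comH_w|$, but the lost intra-$\comH$ edge \emph{raises} the numerator by one, because $|E(u,\comH)|$ enters with a minus sign. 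So the exact endpoint contributions are $-\frac{|\comH_w|-1}{2(|\comH|-1)}$ and $-\frac{|\comH_u|-1}{2(|\comH|-1)}$, which is consistent with Corollary~\ref{cor:edge-deletion-saf} (its $\frac{1}{|\comH|-1}$ is precisely the two $+\frac{1}{2(|\comH|-1)}$ gains from the dropped intra-$\comH$ edge) and with the companion intra-$\comH$ addition theorem, whose analogous terms are $\frac{|\comH_w|-1}{2(|\comH|-1)}$ and $\frac{|\comH_u|-1}{2(|\comH|-1)}$. Calling the endpoint contribution a ``reachability-plus-intra-edge loss'' suggests you would subtract one more per endpoint, which is exactly what the displayed condition (with $|\comH_w|+1$ and $|\comH_u|+1$) does.

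Because of this, the displayed inequality is a strictly stronger condition than exact positivity of the change (its left-hand side is smaller by $\frac{2}{|\comH|-1}$), so the theorem, which only claims an ``if,'' is still salvageable along your lines: derive the exact condition with the $-1$'s and observe that the displayed inequality implies it. But as written, the decisive computation is asserted rather than carried out, asserted with the wrong sign on the $|E(u,\comH)|$ contribution, and upgraded to an equivalence that does not hold for the displayed formula; that is the gap you need to close. A minor further point: the connectivity assumption you flag is not actually needed here, since the sums range only over $\comH_u\setminus\{u\}$ and $\comH_w\setminus\{w\}$, i.e.\ the two pieces of the pre-deletion component of $u$ and $w$, and members of $\comH$ lying in other components have all their quantities unchanged.
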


Similarly to the previous case, since $\comH$ is a community, it is 
reasonable to preserve the possibility for the members of $\comH$ to 
communicate with each other and thus that induced subgraph of $G'$ on the 
nodes in $\comH$ has a single connected component. By looking at the 
previous theorems, the following corollary holds.

\begin{corollary}
The best edge deletion is an intra-$\comH$ edge $(u,w)$ with $\{u,w\}\subset\comH$  
having the highest value {\small{
$\frac{1}{|\comH|-1}+\frac{|\edgesNoComH{u}|}{2\degC{u}(\degC{u}-1)}+\frac{|E(\edgesNoComH{w})|}{2\degC{w}(\degC{w}-1)}$}}. 
\label{cor:edge-deletion-saf}
\end{corollary}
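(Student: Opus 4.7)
The plan is to reduce the optimization to intra-$\comH$ deletions and then compute the safeness gain in closed form. First, I invoke the inter-$\comH$ deletion theorem, which says that removing an edge $(u,w)$ with $u\in\comH$, $w\notin\comH$ strictly decreases safeness ($\sIncrease<0$). Hence no inter-$\comH$ deletion can be optimal, and the search reduces to intra-$\comH$ edges $(u,w)$ with $\{u,w\}\subset\comH$. Second, I rely on the convention stated just before the corollary: we only consider deletions that keep the induced subgraph of $G$ on $\comH$ connected, i.e.\ the case $w\in\reachNodes{u}{\comH}$ in $\netN$, since otherwise (by the second case of the preceding theorem) the gain is governed by a different, typically negative expression and would disconnect $\comH$.

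Next I compute $\sIncrease$ explicitly under the connectivity assumption. For every vertex $v\in\comH\setminus\{u,w\}$, removing the edge $(u,w)$ leaves $\degC{v}$, $|\edgesNoComH{v}|$, $|E(v,\comH)|$ and $\reachNodes{v}{\comH}$ all unchanged, so $\safScoreNode{v}$ contributes nothing to the difference. For $u$, the count $|E(u,\comH)|$ drops by one while $\reachNodes{u}{\comH}=\comH$ is preserved (since $\comH$ stays connected on both sides), and $|\edgesNoComH{u}|$ is unchanged while $\degC{u}$ becomes $\degC{u}-1$. A direct subtraction gives
\begin{equation*}
\safScoreNode{u}^{\netN}-\safScoreNode{u}^{\net}=\frac{1}{2(|\comH|-1)}+\frac{|\edgesNoComH{u}|}{2\degC{u}(\degC{u}-1)},
\end{equation*}
and the symmetric computation holds for $w$.

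Summing these two contributions and dividing by $|\comH|$ yields
\begin{equation*}
\sIncrease=\frac{1}{|\comH|}\left(\frac{1}{|\comH|-1}+\frac{|\edgesNoComH{u}|}{2\degC{u}(\degC{u}-1)}+\frac{|\edgesNoComH{w}|}{2\degC{w}(\degC{w}-1)}\right).
\end{equation*}
Since $|\comH|$ is a constant of the instance, maximizing $\sIncrease$ over the admissible intra-$\comH$ edges is equivalent to maximizing the parenthesized quantity, which is exactly the expression in the statement. Together with step one (ruling out inter-$\comH$ deletions), this proves the corollary.

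The main obstacle I anticipate is being careful with the reachability term: one must argue that $\reachNodes{v}{\comH}$ remains $\comH$ for every $v\in\comH$ after the deletion, which is precisely the content of the connectivity assumption, and also check that no third node's safeness changes (this is immediate because safeness of $v$ depends only on edges incident to $v$). Everything else is a routine algebraic simplification of the telescoping $\tfrac{1}{\degC{u}-1}-\tfrac{1}{\degC{u}}=\tfrac{1}{\degC{u}(\degC{u}-1)}$.
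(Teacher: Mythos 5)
Your proposal is correct and follows essentially the same route as the paper: rule out inter-$\comH$ deletions via the theorem stating $\sIncrease<0$, restrict to connectivity-preserving intra-$\comH$ deletions (the case $w\in\reachNodes{u}{\comH}$ in $\netN$, consistent with the paper's stated assumption that $\comH$ should remain connected), and observe that only the two endpoints' safeness changes, giving the per-edge gain $\frac{1}{|\comH|}\bigl(\frac{1}{|\comH|-1}+\frac{|\edgesNoComH{u}|}{2\degC{u}(\degC{u}-1)}+\frac{|\edgesNoComH{w}|}{2\degC{w}(\degC{w}-1)}\bigr)$, whose maximization is exactly the criterion in the corollary. Your closed-form computation matches the quantity the paper's theorems identify, so the argument is sound.
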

\vspace{-.2cm}
\subsection{The $\algoSaf$ algorithm}
\label{sec:safeness-algo}
The community deception algorithm $\algoSaf$ outlined in 
Algorithm~\ref{algo-safeness-max} builds upon 
the analysis performed in Section~\ref{sec:impact-edges-safeness}.  
$\algoSaf$ at each step compares the two most convenient edge updates 
(line 13) as per Corollary~\ref{cor:edge-deletion-saf} (lines 2-3) and 
Corollary~\ref{cor:edge-addition-saf} (lines 5-7; lines 9-11) and returns the 
update giving the highest safeness gain. 
\vspace{-.1cm}
\begin{algorithm}
	{\scriptsize
	\caption{\scriptsize $\algoSaf$ - Community deception via Safeness}
	\label{algo-safeness-max}
	\begin{algorithmic}[1]
		\Procedure{getBestUpdateSafeness}{\net,$\comH$}
		\State Let $(n_k,n_l)$: $\{n_k,n_l\}\subseteq\comH$ and $n_k$ and 
		$n_l$ chosen according to Cor.~\ref{cor:edge-deletion-saf}. 
		\State $\sIncrease^{del}$= intra-$\comH$ edge deletion gain for $(n_k,n_l)$
\State Let $n_p$$\in$$\comH$ be the node having the lowest ratio $\frac{|\edgesNoComH{n_p}|}{deg(n_p)}$ 
\State Let $n_t$$\in$$V$$\setminus$$\comH$ be a randomly selected node, s.t.  $(n_p,n_t)$$\notin$$E$
\State $\sIncrease^{add}$= inter-$\comH$ edge addition gain for $(n_p,n_t)$

\If {$\sIncrease^{del} \geq \sIncrease^{add}$}
\State \Return $(V,E\setminus\{(n_k,n_l)\})$ 
\Else
\State \Return $(V,E\cup\{(n_p,n_t)\})$  
\EndIf
		\EndProcedure
	\end{algorithmic}
}
\end{algorithm}
\vspace{-.3cm}

%
%
\section{Experimental Evaluation}
\label{sec:evaluation}
We now report on the experimental evaluation. We start by describing the 
experimental setting, the evaluation methodology, and then report on the 
results.
%

%
%
\noindent
\textbf{Experimental Setting.}
We performed all the experiments in the worst-case 
scenario, that is $\comH\in \comS$; we pick a random $i\leq |\comS|$ and 
assume $\comH=\com_i$.
We investigated to what extent our algorithms $\algoMod$ and $\algoSaf$ 
are able to deceive the following community 
detection algorithms available 
in igraph\footnote{\url{http://igraph.org/r}}:
Louvain~\cite{blondel2008fast} (\texttt{louv}), 
Optimal~\cite{brandes2008modularity} 
(\texttt{opt}), InfoMap~\cite{rosvall2008maps} (\texttt{inf}), 
WalkTrap~\cite{pons2005computing} 
(\texttt{walk}),Greedy~\cite{clauset2004finding} (\texttt{gre}), 
SpinGlass~\cite{reichardt2006statistical} (\texttt{spin}), Label 
propagation~\cite{raghavan2007near} (\texttt{lab}), Leading 
Eigenvectors~\cite{newman2006finding} (\texttt{eig}), and 
Edge-Betweeness~\cite{girvan2002community}  (\texttt{btw}).

\smallskip
\noindent
\underline{Datasets.} We considered the following networks:
Zachary's Karate Club (\texttt{kar}), Dolphins 
(\texttt{dol}), Les Miserables (\texttt{lesm}), American 
College Football (\texttt{ftb}), Madrid Terrorist Network (\texttt{mad}), Books 
about US Politics (\texttt{pol}), and USA Power Grid 
(\texttt{pow}) available 
online\footnote{\url{http://www-personal.umich.edu/~mejn/netdata}}.
We also generated networks according to the 
community detection benchmark generator described by Lancichinetti and 
Fortunato~\cite{lancichinetti2009benchmarks}\footnote{The code is 
available 
at \url{https://sites.google.com/site/santofortunato/inthepress2}}.
Experiments have been conducted on a PC i5 CPU 2.6 GHz and 8GB RAM. 
The code of our implementation in R, the datasets and instructions about how to 
replicate the experiments are available 
online\footnote{\url{https://github.com/giuseppepirro/com-deception}}.

\smallskip
\noindent
\textbf{Evaluation Methodology.}  To measure the success of community 
deception algorithms we define the community deception score 
$\hScore{\net}$.
%
\vspace{-.2cm}
\begin{defn} \textbf{(Community Deception Score)}. Given the output of a 
	detection algorithm $\comAlgo$ $\comSD$, the community deception score 
$\hScore{\net}$ is:
\begin{equation}
\scriptsize
\hScore{\net}(\comH,\comAlgo)=(1-\frac{|S(\comH)|\mbox{-}1}{|\comH|\mbox{-}1})\left(\frac{1}{2}\frac{|\{C_i|C_i\cap\comH\neq\emptyset\}|\mbox{-}1}{|\comH|}
+\frac{1}{2}(1-\frac{\sum_{C_i\mid
 C_i\cap\comH\neq\emptyset}\frac{|C_i\cap\comH|}{|C_i|}}{|\{C_i\mid 
C_i\cap\comH\neq\emptyset\}|})\right)
\label{eq:deception-score}
\end{equation}
\vspace{-.2cm}
 $S(\comH)$ are the connected components in the subgraph induced 
by $\comH's$ members.
\label{def:deception-score}
\end{defn}

The first multiplicative factor in eq.~(\ref{eq:deception-score}) takes into 
account the fact that a deception 
algorithm should preserve as much as possible reachability between 
nodes in $\comH$. The best situation is when all nodes are in a 
single connected component while the worst case occurs when they all 
belong to a different connected component.
The second multiplicative factor includes two terms. The first term measures 
the \textit{community spread}, that is, how $\comH's$ members are spread 
within $\comS$. It reaches its maximum when each member of $\comH$ is 
placed by $\comAlgo$ in a different community. The second term measures 
\textit{community hiding}, that is, the average percentage of $\comH's$ 
members in the communities in $\comSD$. The ideal situation is when 
each community $\com_i\in\comS$ contains a little percentage of
$\comH's$ nodes. Summing up, $\hScore{\net}$$\sim$1 if 
\textit{(i)} $\comH's$ nodes are in a single connected components and 
\textit{(ii)} each such nodes belongs to a different (large) community. 
Conversely, $\hScore{\net}$=0 if \textit{(i)} each member of $\comH$  
belongs to a different connected component or \textit{(ii)} $\comH\in 
\comS$.
The evaluation has been conducted as shown in 
Algorithm~\ref{algo-evaluation-algorithms}. We 
consider a budget of changes $\budget$ such that $|\netPE|+|\netME|\leq \budget$ and compute
the new values of modularity, safeness and deception score
\textit{after} applying all the updates found by the deception algorithms and 
compare them with their initial values.
\vspace{-.7cm}
\begin{algorithm}[!h]
	{\scriptsize
		\caption{\scriptsize Evaluating Community Deception Algorithms}
		\label{algo-evaluation-algorithms}
		\begin{algorithmic}[1]
			\Procedure{evaluateDeceptionAlgo}{\net,$\beta$, 
			$\comAlgo$,$\algoProtGen$}
			\State $\comS$=$\comAlgo(\net)$
			\State $\comH$=getTargetCommunity($\comS$); 
			\State $\modCH$=initialMod($\comS,\net$); 
			$\safScore$=initialSafe($\comH,\net$); 
			$\hScore{\net}$=initialDecept($\comH,\comS$,\net)
					\While{$\beta > 0$}
			\State $E'$=		
		getBestUpdate($\net$,$\comS$,$\comH$,$\algoProtGen$) /* computed via 
		$\algoMod$ or $\algoSaf$ */
			\State $\netN$=$(V,E')$; $\budget$=$\budget$-1
			\EndWhile
			\State $\comSN$=$\comAlgo(\netN)$; 
			\State
			$\modularitySymbol_{\netN}(\comSN)$=finalMod($\comS',\netN$); 
			$\safScoreN{\netN}$=finalSafe($\comH,\netN$); s
			$\hScore{\netN}$=finalDecept($\comH,\comSN,\netN$)
			\EndProcedure
		\end{algorithmic}
	}
	\end{algorithm}
	\vspace{-1.3cm}
	%
\subsection{Evaluation Results} 
We start with \underline{real world 
networks}. Fig.~\ref{fig:heat-map} reports the values of the deception score 
(average of 10 runs) after applying our deception algorithms when varying the 
budget of updates 
$\budget$ from 1 to 4. Each column represents a dataset and each row a 
community detection algorithm. The range of the colors reflects the final 
value of the deception score (green is better). White cell reflect problems 
with the detection algorithms (e.g., 
\texttt{spin} does not work with disconnected networks, \texttt{opt} was 
stopped after 1h). 
As it can 
be observed, results vary with the network and detection/deception 
algorithm. 
A quick look suggests that deception based on safeness (i.e., $\algoSaf$) 
generally performs better. 
\vspace{-.7cm}
\begin{figure}[!h]
	\centering
	\includegraphics[width=\textwidth]{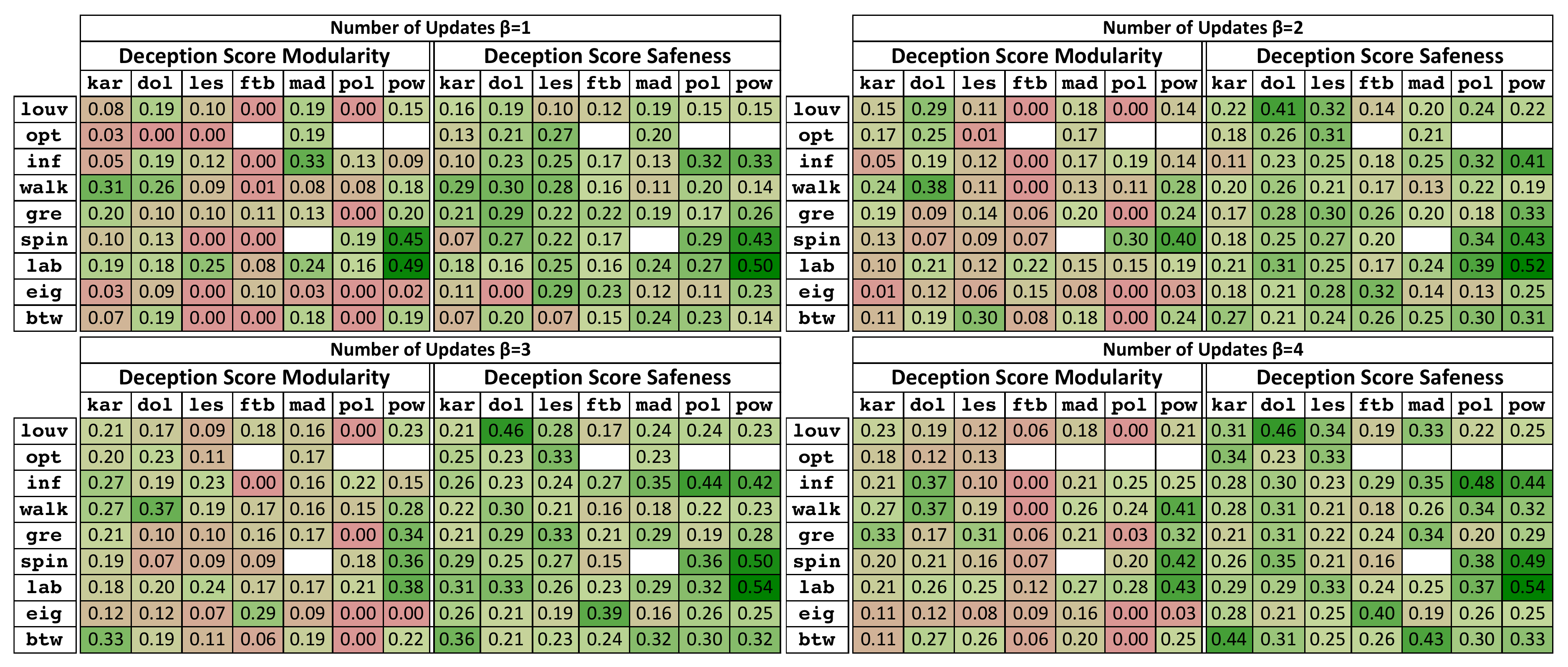}
		\vspace{-.7cm}
	\caption{Deception score ($\hScore{}$) for modularity-based and 
	safeness-based deception.}
	\label{fig:heat-map}
	\vspace{-.6cm}
\end{figure}

Moreover, the 
level of deception increases as the number of updates allowed increases for 
almost all the algorithms. When $\budget$=1, the deception 
algorithm based on modularity (i.e., $\algoMod$) obtains the worst 
deception values with the network \texttt{ftb}, which represents the 
schedule of football games between American college teams. On the same 
network $\algoSaf$ performs clearly better. Note also that $\algoMod$ gives 
the best deception score with $\budget$=1 for the network (\texttt{pow}), 
which represents the topology of the Wester USA power grid and the 
algorithms \texttt{spin} and \texttt{lab}. From a deception point of view, this 
means that 
these two algorithms are deceivable with only one update. In general, from 
Fig.~\ref{fig:heat-map} it can be observed that already with a single update 
($\budget$=1) 
safeness-based deception performs reasonably well, considering that our 
experiments are conducted in the worst case scenario (i.e., 
$\hScore{\net}$=0). We conducted further experiments (note reported for 
sake of space) by considering 
$\budget$=5 and $\budget$=6 and observed an increase of $\hScore{}$ for 
both modularity-based 
and safeness-based deception.
%
%
%
\vspace{-.869cm}
\begin{figure}[!h]
		\begin{minipage}{1.0\textwidth}
		\centering
		\includegraphics[width=\textwidth]{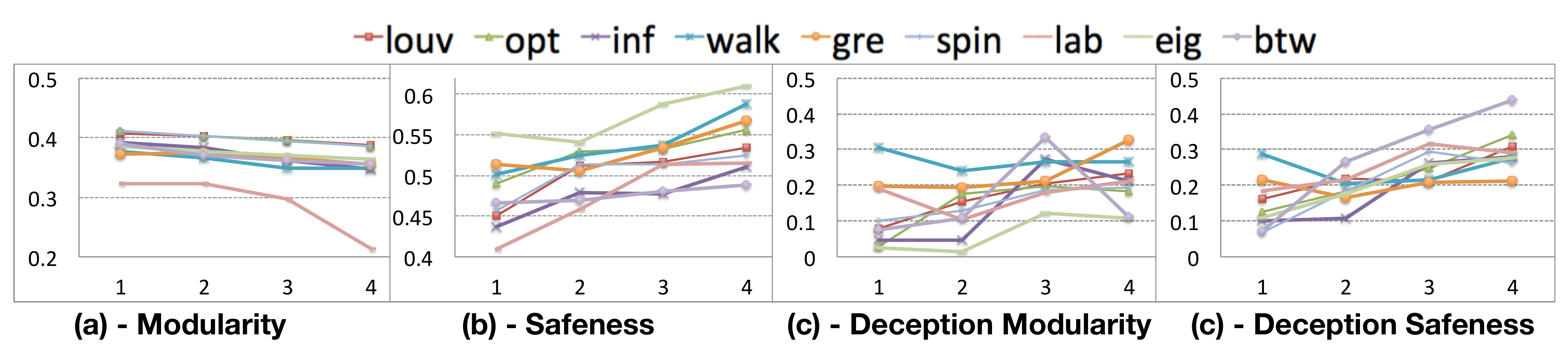}
		\vspace{-.85cm}
	\caption{Network \texttt{kar}: 34 nodes and 78 edges. 
		Avg $|\comS|$=4; Avg $|\comH|$=13.}
	\label{fig:all-kar}
	\end{minipage}
		\begin{minipage}{1.0\textwidth}
					\centering
		\includegraphics[width=\textwidth]{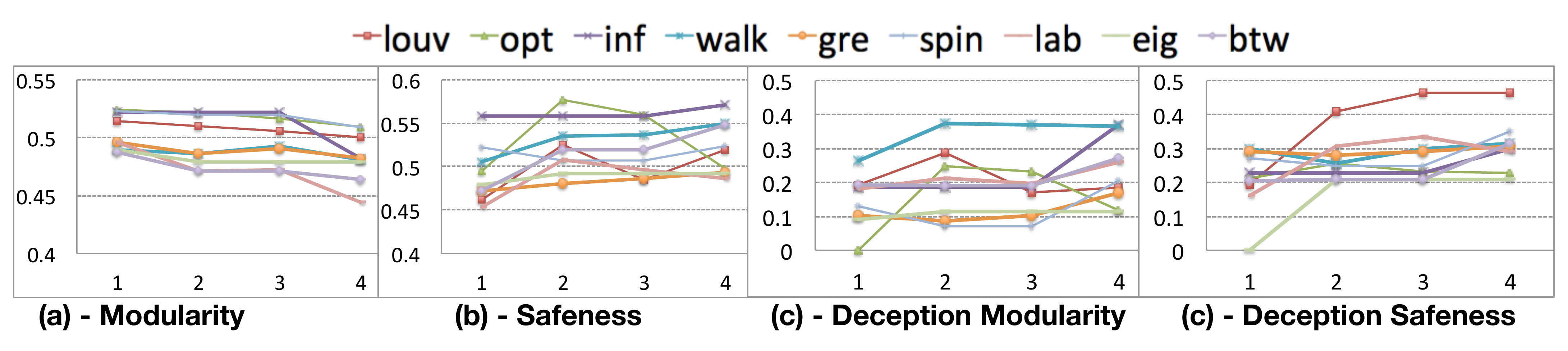}
		\vspace{-.85cm}
		\caption{Network \texttt{dolph}: 62 nodes and 159 edges. 
			Avg $|\comS|$=9; Avg $|\comH|$=11.}
		\label{fig:all-dolph}
	\end{minipage}
\begin{minipage}{1.0\textwidth}
	\centering
	\includegraphics[width=\textwidth]{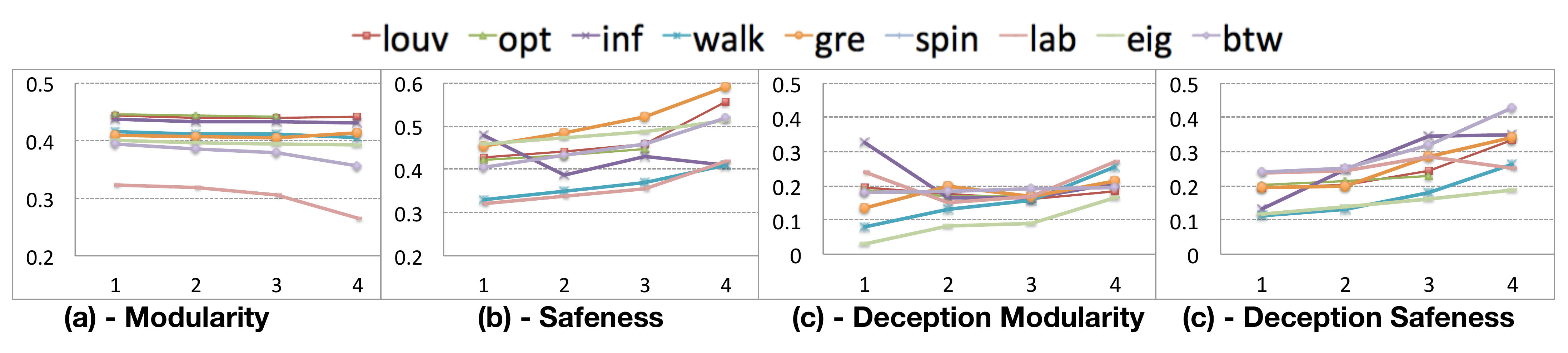}
		\vspace{-.85cm}
	\caption{Network \texttt{mad}: 62 nodes and 243 edges. 
		Avg $|\comS|$=6; Avg $|\comH|$=12.}
	\label{fig:all-mad}
\end{minipage}
\begin{minipage}{1.0\textwidth}
	\centering
	\includegraphics[width=\textwidth]{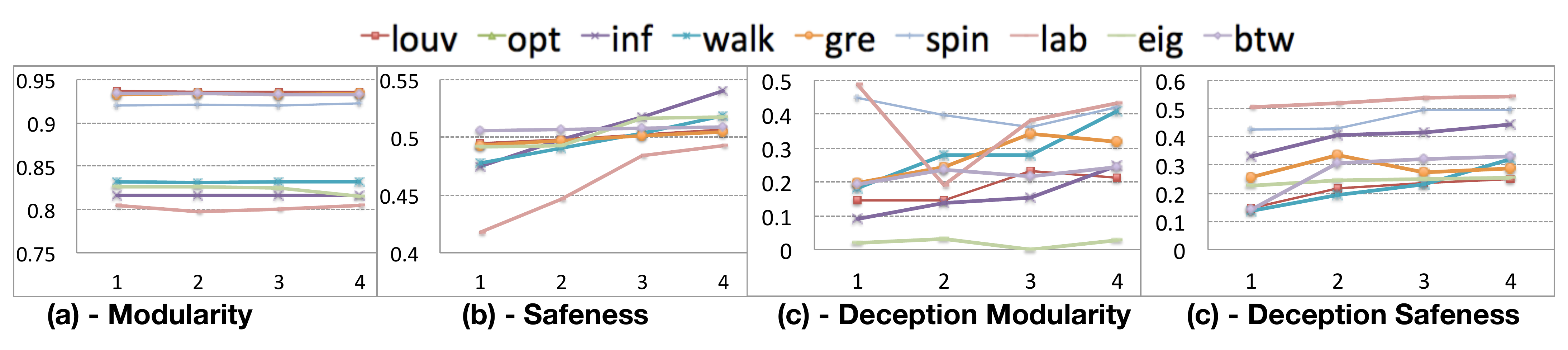}
		\vspace{-.85cm}
	\caption{Network \texttt{pow}: 6594 nodes and 4941 edges. 
		Avg $|\comS|$=40; Avg $|\comH|$=174.}
	\label{fig:all-pow}
\end{minipage}
\vspace{-.7cm}
\end{figure}

Figures (\ref{fig:all-kar})-(\ref{fig:all-pow}) provide a more detailed 
view of modularity, safeness and deception score for four of the considered 
networks.  We also report the size of the networks, average number of  
communities (Avg $|\comS|$) considering all the detection algorithms and 
average size of the community to hide (Avg $|\comH|$).
It can be noted that modularity decreases and safeness increases when the 
budget $\budget$ increases. This confirm the analyses performed in 
Section~\ref{sec:impact-edges-modularity} for modularity and 
Section~\ref{sec:impact-edges-safeness} for safeness. 
For modularity-based deception, $\hScore{}$ does not always 
increase when $\budget$ increases while for safeness-based deception 
$\hScore{}$ always increases. We explain this behavior by the fact that 
modularity-based deception simply aims at maximizing the modularity loss, 
while safeness also looks at reachability, which is in 
some form incorporated in the deception score (see 
Definition~\ref{def:deception-score}). Indeed, we observed that in some 
cases the modularity-based deception algorithm disconnects the community 
$\comH$  
while this is avoided by safeness-based deception. Figures 
(\ref{fig:all-kar})-(\ref{fig:all-pow}) also suggest that when the size of the 
network, the size of $\comH$ and the number of communities increases, 
the deception score is higher; 
it reaches the maximum value in the \texttt{pow} network (Avg$|\comH|$ is 
the 0.025\% of the size of $\net$) and the \texttt{lab} 
detection algorithm.

\smallskip
\noindent
We conducted experiments also on \underline{artificially generated 
networks}.The goal was to investigate the 
impact of the number of communities and size of $\comH$ on our deception 
algorithms because of the correlation observed in the experiments on real 
networks previously discussed. The community detection benchmark 
generator~\cite{lancichinetti2009benchmarks}
allows to generate networks having certain characteristics such as: size 
(\texttt{nodes}) average node degree (\texttt{avgD}), max degree 
(\texttt{maxD}), min size (\texttt{minC}) and max size (\texttt{maxC}) of the 
communities generated belonging to the ground truth. In this 
paper we are not interested in evaluating the performance of detection algorithms 
(viz. comparing their output with the ground truth). However, we noticed that the 
size of the communities found reflects pretty well the values of the parameters 
\texttt{minC} and \texttt{maxC} used in the generation of the networks. 
\vspace{-.8cm}
\begin{figure}[!h]
	\centering
	\includegraphics[width=\textwidth]{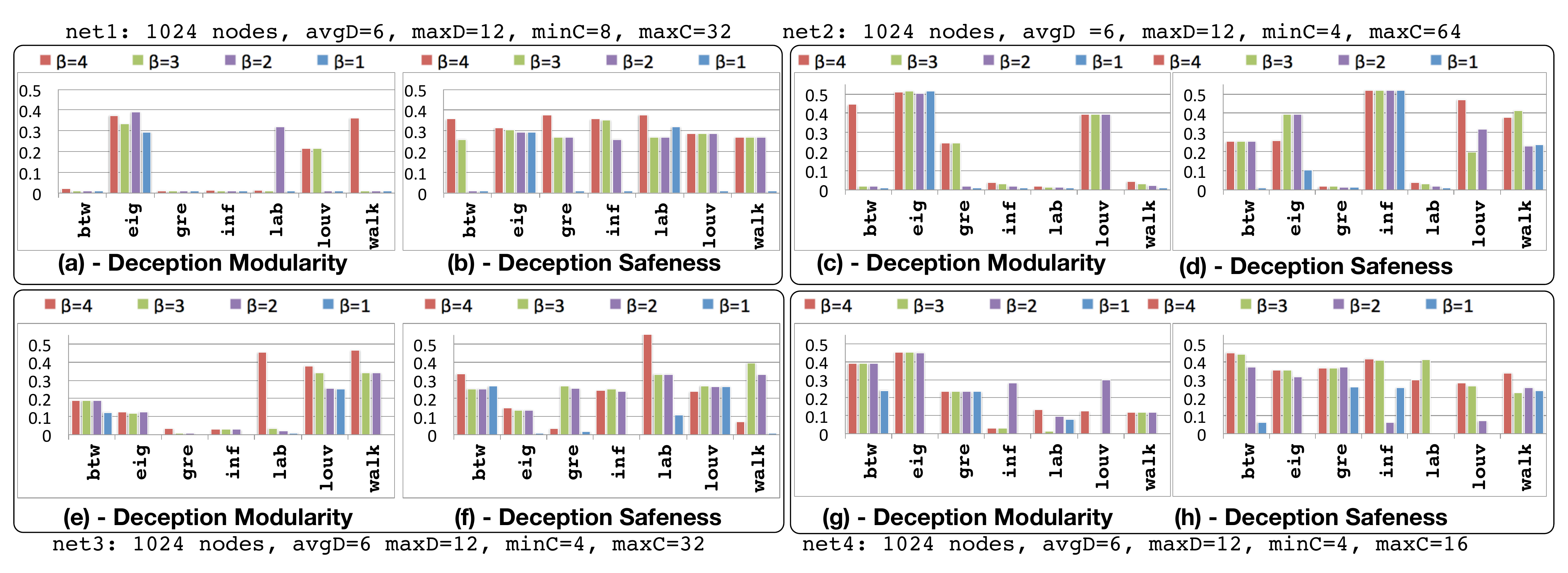}
	\vspace{-.7cm}
	\caption{Experiments on networks generated with the
		benchmarking software~\cite{lancichinetti2009benchmarks}.}
	\label{fig:exp-generated}
\end{figure}
\vspace{-.7cm}

We fixed the size and degree of nodes and generated networks having 
different community sizes. For sake fo space, we report in 
Fig.~\ref{fig:exp-generated} results on four of the ten generated networks. 
Moreover, we report the average results of 10 runs only for 
detection algorithms did not generate errors (e.g., the igraph 
implementation of \texttt{spin} and \texttt{opt} threw exceptions).
As our experiments are performed in the worst-case scenario (i.e., $\comH 
\in \comS$) we were able to investigate how a variation of the size of 
$\comH$ affects deception. It emerges 
(Fig~\ref{fig:exp-generated}) that 
when \texttt{maxC} decreases (i.e., \texttt{net4}) our deception algorithms 
are able to deceive a larger number of detection algorithms. We observed 
the same behavior in all the 10 networks.

\smallskip
\noindent
\textit{Summary}. In all the networks and for both 
deception algorithms we observed a dependency among size of $\comH$ 
(and $\net$), budget $\budget$ and deception score $\hScore{}$.  
When the size of $\comH$ increases by keeping constant $|\net|$ and 
$\budget$ the deception score decreases. This can be explained by the fact 
that spreading a larger number of nodes (as done by our deception 
algorithms) requires more network updates. In general, the lower the ratio 
$|\comH|/|\net|$ the higher $\hScore{}$ (no matter 
the detection algorithm).

We observed that safeness-based deception in $\sim$80\% 
of the cases does not change the number of communities while for 
modularity this happens in $\sim$60\% of the cases. We leave a 
more detailed study of this aspect for future work. As for the running 
times, they range from $\sim$1s to up to $\sim$15s (e.g., for the 
\texttt{pow} 
network); in general, safeness-based deception requires more time than 
modularity-based deception as it needs to check and preserve reachability 
among nodes in $\comH$.
\vspace{-.3cm}
\section{Concluding Remarks and Future Work}
\label{sec:conclusions}
So far the literature has focused on the design of community 
 detection algorithms. While this is certainly useful in some contexts where 
 one wants to understand the structure of a (complex) network, in some 
 others there is the need to hide the presence of a community.
In this paper we initiate the study of this problem that we dubbed as 
\textit{community deception}.
Our community deception algorithms are based on \textit{update rules} and 
thus suitable to 
deal with network dynamics~\cite{rozenshtein2014discovering}. Although 
we did not deal with node addition/deletions, it is immediate to see that a 
node addition corresponds to the creation of a node followed by (at least) an 
edge insertion, while a node deletion amounts at a set of edge deletions.
To measure the performance of deception algorithms we introduced the 
deception score $\hScore{}$. One may be tempted to devise algorithms that 
\textit{directly optimize} $\hScore{}$. $\hScore{}$ has been defined as a 
measure computed \textit{after} updating the network as 
suggested by the deception algorithms and recomputing the communities 
via detection algorithms. Our algorithms do 
not need to recompute communities for each update and thus provide a 
more efficient way to pursue community deception. From our experimental 
evaluation it emerged that the success of deception algorithms depends on 
the size of the community to be hidden, the total number of communities, 
and the size of $\net$.

Devising other instantiation of the general $\phi_{\comAlgo}$ function is 
an 
interesting line of future work. 
While we have studied how to \textit{deceive} detection 
algorithms, it is also interesting to investigate how detection algorithms can 
be made deception-aware. In this 
respect, a more speculative line of future work is to investigate whether 
certain types of complex networks such as 
biological networks exhibit some (natural) form of deceptive behavior. 
Another line of future work is to consider overlapping 
communities~\cite{galbrun2014overlapping} and networks with 
attributes~\cite{yang2013community}.
\vspace{-.4cm}
\bibliographystyle{abbrv}
\bibliography{biblio}
 \end{document}